\definecolor{darkblue}{rgb}{0,0.08,0.4} 
\definecolor{darkred}{rgb}{0.4,0.08,0} 
\newtheorem{theorem}{Theorem}
\newtheorem{lemma}{Lemma}
\newtheorem{remark}{\hspace{0pt}\bf Remark}
\newcommand{\lbra}{$[$}
\newcommand{\rbra}{$]$}
\newcommand{\BR}{\text{BR}}
\def\forall{\text{for all\ }}
\title{Bayesian Quadratic Network Game Filters}
\author{\small \IEEEauthorblockN{Ceyhun Eksin, Pooya Molavi, Alejandro Ribeiro, and Ali Jadbabaie} \\ 
Department of Electrical and Systems Engineering, University of Pennsylvania, \\ 200 South 33rd Street, Philadelphia, PA 19104. \\ Email:\{ceksin, pooya, aribeiro, jadbabai\}@seas.upenn.edu.\thanks{Work in this paper supported by ARO W911NF-10-1-0388, NSF CAREER CCF-0952867, NSF CCF-1017454, and AFOSR MURI FA9550-10-1-0567. Part of the results in this paper have been submitted to ICASSP 2013 \cite{cEksinEtal13}.}}
\date{\today}
\begin{document}

\normalsize
\maketitle
\begin{abstract}
A repeated network game where agents have quadratic utilities that depend on information externalities -- an unknown underlying state -- as well as payoff externalities -- the actions of all other agents in the network -- is considered. Agents play Bayesian Nash Equilibrium strategies with respect to their beliefs on the state of the world and the actions of all other nodes in the network. These beliefs are refined over subsequent stages based on the observed actions of neighboring peers. This paper introduces the Quadratic Network Game (QNG) filter that agents can run locally to update their beliefs, select corresponding optimal actions, and eventually learn a sufficient statistic of the network's state. The QNG filter is demonstrated on a Cournot market competition game and a coordination game to implement navigation of an autonomous team.\vspace{-5mm}
\end{abstract}

\section{Introduction}

Games with information and payoff externalities are common models of networked economic behavior. In, e.g., trade decisions in a stock market, the payoff that a player receives depends not only on the fundamental (unknown) price of the stock but on the buy decisions of other market participants. Thus, players must respond to both, their belief on the price of the stock and their belief on the actions of other players \cite{MorrisShin}. Similar games can also be used to model the coordination of members of an autonomous team whereby agents want to select an action that is jointly optimal but only have partial knowledge about what the action of other members of the team will be. Consequently, agents select actions that they deem optimal given what they know about the task they want to accomplish and the actions they expect other agents to take. 

In both of the examples in the previous paragraph we have a network of autonomous agents intent on selecting actions that maximize local utilities that depend on an unknown state of the world -- information externalities --  and  also the unknown actions of all other agents -- payoff externalities. In a Bayesian setting -- or a rational setting, to use the nomenclature common in the economics literature \cite{sobel2000economists} -- nodes form a belief on the actions of their peers and select an action that maximizes the expected payoff with respect to those beliefs. In turn, forming these beliefs requires that each network element make a model of how other members will respond to their local beliefs. The natural assumption is that they exhibit the same behavior, namely that they are also maximizing their expected payoffs with respect to a model of other nodes' responses. But that means the first network element needs a model of other agents' models which shall include their models of his model of their model and so on. The fixed point of this iterative chain of reasoning is a Bayesian Nash Equilibrium (BNE). 

In this paper we consider repeated versions of this game in which agents observe the actions taken by neighboring agents at a given time. In observing neighboring actions agents have the opportunity to learn about the private information that neighbors are, perhaps unwillingly, revealing \cite{MorrisShin}. Acquiring this information alters agents' beliefs leading to the selection of new actions which become known at the next play prompting further reevaluation of beliefs and corresponding actions. In this context we talk of Bayesian learning because the agents' goal can be reinterpreted as the eventual learning of peers' actions so that expected payoffs coincide with actual payoffs. This paper considers Gaussian prior distributions and quadratic utilities. For this type of problem we introduce the Quadratic Network Game (QNG) filter that agents can run locally to update their beliefs, select corresponding actions that maximize expected payoffs, and eventually learn a sufficient statistic of the network's state. 

The burden of computing a BNE in repeated games is, in general, overwhelming even for small sized networks \cite{mossel2010efficient}. This intractability has led to the study of simplified models in which agents are non-Bayesian and update their beliefs according to some heuristic rule \cite{BalaGoyal,Alireza,Zwiebeletal,GolubJackson,DeGroot}. A different simplification is obtained in models with pure information externalities where payoffs depend on the self action and an underlying state but not on the actions of others. This is reminiscent of distributed estimation \cite{Xiaoetal,Rabbat,Ribeiro_ConsensusI,SoummyaKar_e,aSayed,chenSayed,ChenSayed_conf,RadTahbaz-Salehi,krishnamurthyquickest,Ribeiro10c} since agents deduce the state of the world by observing neighboring actions without strategic considerations on the actions of peers. Computations are still intractable in the case of pure information externalities and for the most part only asymptotic analyses of learning dynamics with rational agents are possible \cite{GaleKariv,Rosenbergetal,Djuric}. Explicit methods to maximize expected payoffs given all past observations of neighboring actions are available only when signals are Gaussian \cite{mossel2010efficient} or when the network structure is a tree \cite{Yash}. For the network games considered here in which there are information as well as payoff externalities, not much is known besides asymptotic analyses of learning dynamics \cite{linear_games_Allerton, asilomar_paper,EksinEtal13_b}.

The specific setting considered in this paper is introduced in Section \ref{the_model}. Agents repeatedly play a game whose payoffs are represented by a utility function that is quadratic in the actions of all agents and an unknown real-valued parameter. At the start of the game each agent makes a private observation of the unknown parameter corrupted by additive Gaussian noise.  At each play stage agents observe actions of adjacent peers from the previous stage that they incorporate into a local observation history which they use to update their inference of the unknown parameter, and synchronously take actions that maximize their expected payoffs. Actions that maximize expected payoffs with respect to local observations histories are defined as best responses to the expected actions taken by other agents. When the expected actions of other agents are also modeled as best responses with respect to their respective observation histories, we say that the network settles into a BNE (Section \ref{sec_bne_definition}). 

In Section \ref{recursion} we determine a mechanism to calculate BNE actions from the perspective of an outside clairvoyant observer that knows all private observations. For this clairvoyant observer the trajectory of the game is completely determined but individual agents operate by forming a belief on the private signals of other agents. We start from the assumption that this probability distribution is normal with an expectation that, from the perspective of the outside observer, can be written as a linear combination of the actual private signals. If such is the case, we prove that there exists a set of linear equations that can be solved to obtain actions that are linear combinations of estimates of private signals (Lemma \ref{linear_equilibrium_at_time_t}). This is then used to show that after observing the actions of their respective adjacent peers the probability distributions on private signals of all agents remain Gaussian with expectations that are still linear combinations of the actual private signals (Lemma \ref{rational_belief_updates_theorem_L}). We proceed to close a complete induction loop to derive a recursive expression that the outside clairvoyant observer can use to compute BNE actions for all game stages (Theorem \ref{rational_updates_theorem}). 

In Section \ref{QNG_filter} we leverage the recursion derived in Section \ref{recursion} to derive the QNG filter that agents can run locally, i.e., without access to all private signals, to compute their BNE action. Results in sections \ref{recursion} and \ref{QNG_filter} are generalized to the case of vector states and observations (Section \ref{vector_state}). We apply the scalar QNG filter to a Cournot competition model (Section \ref{cournot}) and to the coordinated movement of a team of mobile agents (Section \ref{coordination_game}).

\medskip\noindent{\bf Notation.} Vectors $\bbv\in\reals^n$ are written in boldface and matrices $A\in\reals^{n\times m}$ in uppercase. We use $\bbzero$ to denote all-zero matrices or vectors of proper dimension. If the dimension is not clear from context, we specify $\bbzero_{n\times m}$. We use $\bbone$ to denote all-one matrices or vectors of proper dimension and $\bbone_{n\times m}$ to clarify dimensions. We use $\bbe_i$ to denote the $i$th element of the standard orthonormal basis of $\reals^n$ and $\bar\bbe_i:=\bbone-\bbe_i$ to write an all-one vector with the $i$th component nulled.

\section{Gaussian Quadratic Games} \label{the_model}

We consider games with incomplete information in which $N$ identical agents in a network repeatedly choose actions and receive payoffs that depend on their own actions, an unknown scalar parameter $\theta\in\reals$, and actions of all other agents. The network is represented by an undirected connected graph $G = (V,E)$ with node set $V=1,\ldots, N$ and edge set $E$. The network structure restricts the information available to agent $i$ who is assumed to observe actions of agents $j$ in its neighborhood $n(i) := \{j: \{j,i\} \in E\}$ composed of agents that share an edge with him. The degree of node $i$ is given by the cardinality of the set $n(i)$ and denoted as $d(i):=\#n(i)$. The neighbors of $i$ are denoted $j_{i,1}<,\ldots,<j_{i,d(i)}$. We assume the network graph $G$ is known to all agents. 

At time $t = 0$ agent $i$ observes a private signal $x_i \in\reals$ which we model as being given by the unknown parameter $\theta$ contaminated with zero mean additive Gaussian noise $\eps_i$,
\begin{equation}\label{private_signal_model}
x_i = \theta + \epsilon_i.
\end{equation}
The noise variances are denoted as $c_{i}:=\bbE\left[\eps_1^2\right]$ and grouped in the vector $\bbc:=[c_1,\ldots,c_N]^T$ which is assumed known to all agents.  The noise terms $\eps_i$ are further assumed independent across agents. For future reference define the vector of private signals $\bbx := [x_1, \ldots, x_N]^T \in \reals^{N \times 1}$ grouping all local observations.

Consider a discrete time variable $t=0,1,2,\ldots$ to index subsequent stages of the game. At each stage $t$ agent $i$ takes scalar action $a_i(t)\in\reals$. The selection of agent $i$, along with the concurrent selections $a_j(t)$ of all other agents $j\in V\setminus\{i\}$ results in a 
payoff $u_i(a_i(t), \{a_{j}(t)\}_{j \in V \setminus i}, \theta)$ that agent $i$ wants to make as large as possible. In this paper we restrict attention to quadratic payoffs which for simplicity we assume to be time invariant. Specifically, selection of actions $\{a_i = a_i(t)\}_{i\in V}$ when the state of the world is $\theta$ results in agent $i$ experiencing a reward
\begin{align} \label{quadratic_utility_form}
    u_i(a_i, \{a_{j}\}_{j \in V \setminus i}, \theta)  
         := - \frac{1}{2}  a_i^2 
            + \sum_{j \in V\setminus i} \beta_{ij} a_i a_j 
            + \delta a_i \theta ,
\end{align}
where $\beta_{ij}\in\reals$ for all $i \in V$, $j \in V \setminus i$ and $\delta\in\reals$ are real valued constants. Notice that since $\partial^2 u_i/\partial a_i^2 = -1 < 0$, the payoff function in \eqref{quadratic_utility_form} is strictly concave with respect to the self action $a_i$ of agent $i$.

Although the goal of agent $i$ is to select the action $a_i(t)$ that maximizes the payoff in \eqref{quadratic_utility_form}, this is not possible because neither the state $\theta$ nor the actions $\{a_{j}(t)\}_{j \in V \setminus i}$ are known to him. Rather, agent $i$ needs to reason about state $\theta$ and actions $\{a_{j}(t)\}_{j \in V \setminus i}$ based on its available information. At time $t=0$ only the private signal $x_i$ is known. Define then the initial information as $h_{i,0} = \{x_i\}$. The information $h_{i,0}$ is used to reason about $\theta$ and the initial actions $\{a_{j}(0)\}_{j \in V \setminus i}$ that other agents are to take in the initial stage of the game. At the playing of this stage, agent $i$ observes the actions $\bba_{n(i)}(0):=[a_{j_{i,1}}(0),\ldots,a_{j_{i,d(i)}}(0)]^T \in \reals^{d(i) \times 1}$ of all agents in his neighborhood. These observed neighboring actions become part of the observation history $h_{i,1} = \big\{ x_i, \bba_{n(i)}(0)\} = \big\{ h_{0,i}, \bba_{n(i)}(0)\big\}$ which allows agent $i$ to improve on his estimate of $\theta$ and the actions $\{a_{j}(1)\}_{j \in V \setminus i}$ that other agents will play on the first stage of the game, thereby also affecting the selection of its own action $a_i(1)$. In general, at any point in time $t$ the history of observations $h_{i,t}$ is augmented to incorporate the actions of neighbors in the previous stage,
\begin{equation}\label{eqn_history_definition}
    h_{i,t} := \big\{ h_{i,t-1}, \bba_{n(i)}(t-1)\big\} 
             = \big\{ x_i, \bba_{n(i)}(u), u<t \big\}. 
\end{equation}
The observed action history $h_{i,t}$ is then used to update the estimates of the world state $\theta$ and the upcoming actions $\{a_j(t)\}_{j \in V \setminus i}$ of all other agents leading to the selection of the action $a_{i}(t)$ in the current stage of the game.
 
The final components of the game that we introduce are the strategies $\sigma_{i,t}$ that are used to map histories to actions. 
In this paper we focus on pure strategies that can be written as functions that map history realizations $h_{i,t}$ to actions $a_{i}(t)$ 
\begin{equation}\label{eqn_strategy_definition}   
\sigma_{i,t}: h_{i,t}\mapsto a_{i}(t). 
\end{equation}
We emphasize the difference between strategy and action. An action $a_i(t)$ is the play of agent $i$ at time $t$, whereas strategies $\sigma_{i,t}$ refer to the map of histories to actions. We can think of the action $a_{i}(t) = \sigma_{i,t} (h_{i,t})$ as the value of the strategy function $\sigma_{i,t}$ associated with the given observed history $h_{i,t}$. Further define the strategy of agent $i$ as the concatenation $\sigma_i:=\{\sigma_{i,u}\}_{u=0,\dots, \infty}$ of strategies that agent $i$ plays at all times. Use $\sigma_t:=\{\sigma_{i,t}\}_{i \in V}$ to refer to the strategies of all players at time $t$, $\sigma_{0:t}:=\{\sigma_{u}\}_{u=0,\ldots t}$ to represent the strategies played by all players between times $0$ and $t$, and $\sigma:=\{\sigma_{u}\}_{u=0,\dots, \infty}=\{\sigma_i\}_{i \in V}$ to denote the strategy profile for all agents $i \in V$ and times $t$. As in the case of the network topology, the strategy $\sigma$ is also assumed to be known to all agents. We study mechanisms for the construction of strategies in the following section.

\subsection{Bayesian Nash equilibria}\label{sec_bne_definition}

Given that agent $i$ wants to maximize the utility in \eqref{quadratic_utility_form} but has access to the partial information available in the observed history $h_{i,t}$ in \eqref{eqn_history_definition}, a reasonable strategy $\sigma_{i,t}$ is to select the action  $a_i(t)$ that maximizes the expected utility with respect to the history $h_{i,t}$. To write this formally note that this expected utility depends on strategies $\sigma_{0:t-1}$ played in the past by all agents and on strategies $\{\sigma_{j,t}\}_{j \in V \setminus i}$ that all other agents are to play in the upcoming turn. Fix then the past strategies $\sigma_{0:t-1}$ and the upcoming strategies $\{\sigma_{j,t}\}_{j \in V \setminus i}$ of other players and define the corresponding best response of player $i$ at time $t$ as
\begin{align}\label{eqn_best_response}
    &\BR_{i,t}\Big( \sigma_{0:t-1},\ \{\sigma_{j,t}\}_{j \in V \setminus i}\Big) 
        := \argmax_{a_i\in \reals}\
             \bbE_{\sigma_{0:t-1}}\Big[u_i(a_i,\{\sigma_{j,t}(h_{j,t})\}_{j\in V\setminus i},\theta)
                  \given h_{i,t}\Big]  .
\end{align}
The strategies $\sigma_{0:t-1}$ in \eqref{eqn_best_response} played at previous times mapped respective histories $\{h_{j,u}\}_{j\in V}$ to actions $\{a_j(u)\}_{j \in V}$ for $u<t$. Therefore, the past strategies $\sigma_{0:t-1}$ determine the manner in which agent $i$ updates his beliefs on the state of the world $\theta$ and on the histories $\{h_{j,t}\}_{j\in V\setminus i}$ observed by other agents. As per \eqref{eqn_strategy_definition} the strategy profiles $\{\sigma_j(t)\}_{j\in V\setminus i}$ of other players in the current stage permit transformation of history beliefs $\{h_{j,t}\}_{j\in V\setminus i}$ into a probability distribution over respective upcoming actions $\{a_j(t)\}_{j \in V \setminus i}$. The resulting joint distribution on $\{a_j(t)\}_{j \in V \setminus i}$ and $\theta$ permits evaluation and maximization of the expectation in \eqref{eqn_best_response}. 

One can think of the profiles $\{\sigma_j(t)\}_{j\in V\setminus i}$ played by other agents in the upcoming stage as the model agent $i$ makes of the behavior of other agents. In that sense the sensible assumption is that other agents are also playing best response to a best response model of other agents. I.e., agent $i$ assumes agent $j$ is playing the best response to its respective model of the behavior of other agents and that the model agent $j$ makes of these responses is that these agents also play best response to a best response model. This modeling assumption leads to the definition of Bayesian Nash equilibria (BNE) as the solution to the fixed point equation
\begin{align}\label{Bayes-Nash}
    \sigma^*_{i,t}(h_{i,t})
       = \BR_{i,t}( \sigma^*_{0:t-1},\ \{\sigma^*_{j,t}\}_{j \in V \setminus i}), 
            \quad \forall h_{i,t},
\end{align}
where we have also added the restriction that an equilibrium strategy $\sigma^*_{i,t-1}$ has been played for all times $u<t$. We emphasize that \eqref{Bayes-Nash} needs to be satisfied for all possible histories $h_{i,t}$ and not just for the history realized in a particular game realization. This is necessary because agent $i$ doesn't know the history observed by agent $j$ but rather a probability distribution on histories. Thus, to evaluate the expectation in \eqref{eqn_best_response} agent $i$ needs a representation of the equilibrium strategy for all possible histories $h_{j,t}$.

If all agents play their BNE strategies as defined in \eqref{Bayes-Nash}, $\sigma^*_{i,t}$ becomes optimal in the usual game theoretic sense. There is no strategy that agent $i$ could unilaterally deviate to that provides a higher expected payoff than $\sigma^*_{i,t}$ [cf. \eqref{eqn_best_response}]. In that sense the BNE strategy is the best that agent $i$ can do given other agents' strategies and his locally available information $h_{i,t}$. In the rest of the paper we consider agents playing with respect to the BNE strategy $\sigma^*_{i,t}$ at all times. To simplify future notation define the expectation operator 
\begin{align}\label{eqn_expectation_operator}
    \bbE_{i,t} \big[ \ \cdot\ \big] 
         :=\ \bbE_{\sigma^*_{0:t-1}} \big[ \ \cdot\ |\ h_{i,t} \big],
\end{align}
to represent expectations with respect to the local history $h_{i,t}$ when agents have played the equilibrium strategy $\sigma^*_{0:t-1}$ in all earlier stages of the game. Similarly, we define the conditional probability distribution of agent $i$ at time $t$ given past strategies $\sigma_{0:t-1}^*$ and his information $h_{i,t}$ as 
%
$\bbP_{i,t}(\cdot) := \bbP_ {\sigma^*_{0:t-1}} \left(\cdot \given h_{i,t}\right)$. 
%

Since $u_i(a_i, \{a_{j}\}_{j \in V \setminus i}, \theta)$ is a strictly concave quadratic function of $a_i$ as per \eqref{quadratic_utility_form}, the same is true of the expected utility $\bbE_{i,t}\big[u_i(a_i,\{\sigma_{j,t}\}_{j\in V\setminus i},\theta)\big]$ that we maximize to obtain the best response in \eqref{eqn_best_response}. We can then rewrite \eqref{eqn_best_response} by nulling the derivative of the expected utility with respect to $a_i$. 
It follows that the fixed point equation in \eqref{Bayes-Nash} can be rewritten as the set of equations
\begin{equation} \label{linear_reply}
   \sigma^*_{i,t}(h_{i,t}) = \sum_{j \in V\setminus\{i\}} 
          \beta_{ij} \bbE_{i,t} [\sigma^*_{j,t}(h_{j,t})] + \delta\,\bbE_{i,t} [ \theta],
\end{equation}
that need to be satisfied for all possible histories $h_{i,t}$ and agents $i$. Our goal is to develop a filter that agents can use to compute their equilibrium actions $a^*_i(t):=\sigma^*_{i,t}(h_{i,t})$ given their observed history $h_{i,t}$. We pursue this in the following section after some remarks.

\begin{remark} \normalfont
It may be of interest to modify the utility in \eqref{quadratic_utility_form} to include more additive terms that are functions of other actions $\{\bba_{j}\}_{j \in V\setminus i}$ and the state of the world $\theta$ but not of the self actions $a_i$. This may change the utility and the expected utility in \eqref{eqn_best_response} but doesn't change the equilibrium strategy in \eqref{Bayes-Nash}. Since these terms do not contain the self action $a_i$, their derivatives are null and do not alter the fixed point equation in \eqref{linear_reply}.
\end{remark}

\begin{remark} \normalfont 
The equilibrium notion in \eqref{Bayes-Nash} is based on the premise of myopic agents that choose actions that optimize payoffs at the present game stage. A more general model is to consider non-myopic agents that consider discounted payoffs of future stages. Non-myopic behavior introduces another layer of strategic reasoning. Forward looking agents would need to take into account the effect of their decisions at each stage of the game on the future path of play knowing that other agents base their future decisions on what they have previously observed. E.g., non-myopic agents might reduce their immediate payoff to harvest information that may result in future gains. Extensions to games with non-myopic agents is beyond the scope of this paper.
\end{remark}

\section{Propagation of probability distributions} \label{recursion}

According to the model in \eqref{linear_reply}, at each stage of the game agents use the observed history $h_{i,t}$ to estimate the unknown parameter $\theta$ as well as the histories $\{h_{j,t}\}_{j\in V\setminus i}$ observed by other agents. They use the latter and the known BNE strategy $\{\sigma^*_{j,t}(h_{j,t})\}_{j\in V\setminus i}$ to form a belief $\bbP_{i,t}(\{a^*_j(t)\}_{j\in V\setminus i})$ on the actions $\{a^*_j(t)\}_{j\in V\setminus i}$ of other agents which they use to compute their equilibrium action $a^*_j(t)$ at time $t$. Observe that if the vector of private signals $\bbx := [x_1, \ldots, x_N]^T$ is given -- not to the agents but to an outside observer -- the trajectory of the game is completely determined as there are no random decisions. Thus, agent $i$ can form beliefs on the histories $\{h_{j,t}\}_{j\in V\setminus i}$ and actions $\{a^*_j(t)\}_{j\in V\setminus i}$ of other agents if it keeps a local belief $\bbP_{i,t}(\bbx)$ on the vector of private signals $\bbx$. A method to track this probability distribution is derived in this section using a complete induction argument.

Start by assuming that at given time $t$, the posterior distribution $\bbP_{i,t}(\bbx)$ is normal. Recalling the definition of the expectation operator $ \bbE_{i,t} \big[\, \cdot\,\big]$ in \eqref{eqn_expectation_operator}, the mean of this normal distribution is $\bbE_{i,t} \left[\bbx\right]$. Define the corresponding error covariance matrix  $M^i_{\bbx \bbx}(t) \in \reals^{N \times N}$ as 
\begin{equation} \label{error_covariance_xx}
     M^i_{\bbx \bbx}(t) 
          := \bbE_{i,t}\Big[\big(\bbx - \bbE_{i,t} 
               \left[\bbx\right]\big)\big(\bbx - \bbE_{i,t} \left[\bbx\right]\big)^T\Big].
\end{equation}
Although agent $i$'s probability distribution for $\bbx$ is sufficient to describe its belief on the state of the system, subsequent derivations are simpler if we keep an explicit belief on the state of the world $\theta$. Therefore, we also assume that agent $i$'s beliefs on $\theta$ and $\bbx$ are jointly Gaussian given history $h_{i,t}$. The mean of $\theta$ is $\bbE_{i,t} \left[\theta\right] $ and the corresponding variance is
\begin{equation} \label{error_covariance_state}
     M^i_{\theta \theta}(t) 
          := \bbE_{i,t}\Big[\big(\theta - \bbE_{i,t}  \left[\theta\right]\big)
               \big(\theta - \bbE_{i,t}  \left[\theta\right]\big)^T\Big].
\end{equation}
The cross covariance $M^i_{\theta \bbx}(t) \in \reals^{1 \times N}$ between the world state $\theta$ and the private signals $\bbx$ is
\begin{equation}\label{error_covariance_thetax}
M^i_{\theta \bbx}(t) := \bbE_{i,t}\Big[\big(\theta - \bbE_{i,t}  \left[\theta\right]\big)\big(\bbx - \bbE_{i,t} \left[\bbx\right]\big)^T \Big].
\end{equation} 
We further make the stronger assumption that the means of this joint Gaussian distribution can be written as linear combinations of the private signals. In particular, we assume that for some known matrix $L_{i,t} \in \reals^{N \times N}$ and vector $\bbk_{i,t} \in \reals^{N \times 1}$ we can write
\begin{align} \label{linear_estimates_first_one}
   \bbE_{i,t} \left[\bbx\right] = L_{i,t} \bbx, \qquad
   \bbE_{i,t} \left[\theta\right] = \bbk_{i,t}^T  \bbx.
\end{align}
Observe that the assumption in \eqref{linear_estimates_first_one} is not that the estimates $\bbE_{i,t} \left[\bbx\right]$  and $\bbE_{i,t}\left[\theta\right] $ are computed as linear combinations of the private signals $\bbx$ -- indeed, $\bbx$ is not known by agent $i$ in general. The assumption is that from the perspective of an external observer the actual computations that agents do are equivalent to the linear transformations in \eqref{linear_estimates_first_one}.

Under the complete induction hypothesis of Gaussian posterior beliefs at time $t$ with expectations as in \eqref{linear_estimates_first_one}, we show that agents play according to linear equilibrium strategies of the form 
\begin{equation} \label{linear_action}
    \sigma^*_{i,t}(h_{i,t}) = \bbv_{i,t}^T \bbE_{i,t}[\bbx],  
\end{equation}
for some action coefficients $\bbv_{i,t} \in \reals^{N\times 1}$ that vary across agents but are independent of the observed history $h_{i,t}$. These can be found by solving a system of linear equations. We do this in the following lemma.

%
\begin{lemma} \label{linear_equilibrium_at_time_t}
Consider a Bayesian game with quadratic utility as in \eqref{quadratic_utility_form}. Suppose that for all agents $i$, the joint posterior beliefs $\bbP_{i,t}([\theta, \bbx^T])$ on the state of the world $\theta$ and the private signals $\bbx$ given the local history $h_{i,t}$ at time $t$ are Gaussian with means expressed as the linear combinations of private signals in \eqref{linear_estimates_first_one} for some known vectors $\bbk_{i,t}$ and matrices $L_{i,t}$. Define the aggregate vector $\bbk_t :=  [\bbk_{1,t}^T, \ldots, \bbk_{N,t}^T]^T \in \reals^{N^2 \times 1} $ stacking the state estimation weights of all agents
and the block matrix $L_t \in \reals^{N^2\times N^2}$ with $N\times N$ diagonal blocks $((L_t))_{ii} = L_{i,t}^T$ and off diagonal blocks $((L_t))_{ij} = -\beta_{ij} L_{i,t}^T L_{j,t}^T$,
\begin{equation} \label{multipliers_matrix}
 L_t\hspace{-1mm}:=\hspace{-1mm} \begin{psmallmatrix}
  L_{1,t}^T & -\beta_{12}L_{1,t}^T L_{2,t}^T & \ldots & -\beta_{1N}L_{1,t}^T  L_{N,t}^T \\
   -\beta_{21} L_{2,t}^T  L_{1,t}^T &L_{2,t}^T&\ldots & -\beta_{2N}L_{2,t}^T  L_{N,t}^T \\
  \vdots & \cdots & \ddots &\vdots \\
   -\beta_{N-11} L_{N-1,t}^T  L_{1,t}^T & \cdots &\qquad L_{N-1,t}^T &  -\beta_{N-1N}L_{N-1,t}^T L_{N,t}^T  \\
   -\beta_{N1} L_{N,t}^T  L_{1,t}^T & \cdots & -\beta_{NN-1}L_{N,t}^T  L_{N-1,t}^T &   L_{N,t}^T
 \end{psmallmatrix} .
 \end{equation}
If there exists a linear equilibrium strategy as in \eqref{linear_action}, the action coefficients $\bbv_t:= [\bbv_{1,t}^T, \ldots, \bbv_{N,t}^T]^T \in \reals^{N^2} $ can be obtained by solving the system of linear equations
\begin{equation} \label{BNE_matrix_form}
    L_t  \bbv_t = \delta \bbk_t.
\end{equation} \end{lemma}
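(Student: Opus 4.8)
The plan is to start from the best-response characterization \eqref{linear_reply}, substitute the assumed linear forms, and collect terms to read off a linear system. The key observation is that everything is linear in $\bbx$ from the outside observer's viewpoint, so the fixed-point equation \eqref{Bayes-Nash} collapses to a finite-dimensional linear algebra identity.

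**Step 1: Evaluate the conditional expectations.** Under the hypothesis, $\bbE_{i,t}[\theta] = \bbk_{i,t}^T\bbx$ and $\bbE_{i,t}[\bbx] = L_{i,t}\bbx$. I need $\bbE_{i,t}[\sigma^*_{j,t}(h_{j,t})]$. Using the conjectured linear strategy \eqref{linear_action}, $\sigma^*_{j,t}(h_{j,t}) = \bbv_{j,t}^T\bbE_{j,t}[\bbx] = \bbv_{j,t}^T L_{j,t}\bbx$, which is a linear functional of $\bbx$. Taking agent $i$'s expectation and using the tower/linearity property $\bbE_{i,t}[\bbx] = L_{i,t}\bbx$, we get $\bbE_{i,t}[\sigma^*_{j,t}(h_{j,t})] = \bbv_{j,t}^T L_{j,t} L_{i,t}\bbx$. (The subtle point here — see the obstacle below — is justifying that $\bbE_{i,t}[\,\bbv_{j,t}^T L_{j,t}\bbx\,] = \bbv_{j,t}^T L_{j,t}\,\bbE_{i,t}[\bbx]$, i.e. that agent $i$'s belief about the linear image of $\bbx$ under agent $j$'s map is the image of agent $i$'s mean; this holds because $L_{j,t}$ and $\bbv_{j,t}$ are deterministic known quantities, so the expectation passes through.)

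**Step 2: Assemble the per-agent equation.** Plug these into \eqref{linear_reply}:
\begin{equation*}
  \bbv_{i,t}^T L_{i,t}\bbx \;=\; \sum_{j\in V\setminus i}\beta_{ij}\,\bbv_{j,t}^T L_{j,t} L_{i,t}\bbx \;+\; \delta\,\bbk_{i,t}^T\bbx .
\end{equation*}
Since this must hold for all histories $h_{i,t}$, and in particular for a range of $\bbx$ values spanning $\reals^N$, the coefficient vectors must agree: $L_{i,t}^T\bbv_{i,t} - \sum_{j\neq i}\beta_{ij} L_{i,t}^T L_{j,t}^T\bbv_{j,t} = \delta\,\bbk_{i,t}$. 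Transposing is what produces the $L_{i,t}^T$ and $L_{i,t}^T L_{j,t}^T$ blocks.

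**Step 3: Stack into block form.** Writing the $N$ equations of Step 2 for $i=1,\ldots,N$ and stacking $\bbv_t = [\bbv_{1,t}^T,\ldots,\bbv_{N,t}^T]^T$, $\bbk_t = [\bbk_{1,t}^T,\ldots,\bbk_{N,t}^T]^T$, the left-hand side is exactly $L_t\bbv_t$ with $L_t$ the block matrix in \eqref{multipliers_matrix}: diagonal blocks $L_{i,t}^T$ (coefficient of $\bbv_{i,t}$) and off-diagonal blocks $-\beta_{ij}L_{i,t}^T L_{j,t}^T$ (coefficient of $\bbv_{j,t}$). The right-hand side is $\delta\bbk_t$. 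This is \eqref{BNE_matrix_form}, and the lemma follows.

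**The main obstacle** I anticipate is not the algebra but the logical care in Step 1--Step 2: one must argue that matching the expression "for all $h_{i,t}$" legitimately forces equality of the $\bbx$-coefficient vectors. The cleanest route is to note that from the external observer's perspective the history $h_{i,t}$ is a deterministic function of $\bbx$, but that the identity \eqref{linear_reply} is an equilibrium condition required to hold as an identity in the strategy functions — equivalently, for every $\bbx$ consistent with some history, and the set of such $\bbx$ has nonempty interior (the private signals are unconstrained reals), so two linear functionals agreeing there must have identical coefficients. A secondary, more bookkeeping-level obstacle is keeping the transpose conventions straight so that the $L_{i,t}^T L_{j,t}^T$ ordering in \eqref{multipliers_matrix} comes out correctly; this just requires care when transposing $\bbv_{j,t}^T L_{j,t} L_{i,t}$ to isolate $\bbv_{j,t}$.
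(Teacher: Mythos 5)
Your proposal is correct and follows essentially the same route as the paper's own proof: substitute the linear strategy into \eqref{linear_reply}, use $\bbE_{i,t}[\bbE_{j,t}[\bbx]]=L_{j,t}\bbE_{i,t}[\bbx]$ together with \eqref{linear_estimates_first_one}, equate the coefficients of $\bbx$ because the equilibrium identity must hold for every history (hence every $\bbx$), and stack the resulting per-agent equations into \eqref{BNE_matrix_form}. The transpose bookkeeping and the ``holds for all $\bbx$'' justification you flag are exactly the points the paper handles in passing from \eqref{BNE_equations_t_v1} to \eqref{BNE_equations_t_v2}.
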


%
\begin{proof} We hypothesize that agents play according to a linear equilibrium strategy as in \eqref{linear_action}. Substituting this candidate strategy into the equilibrium equations in \eqref{linear_reply} yields
\begin{equation} \label{BNE_equations_t}
    \bbv_{i,t}^T \bbE_{i,t}[\bbx] 
         = \sum_{j \in V\setminus\{i\}} 
               \beta_{ij} \bbE_{i,t} \Big[\bbv_{j,t}^T \bbE_{j,t}[\bbx]\Big] 
                + \delta \,  \bbE_{i,t}[\theta].
\end{equation}
The summation in \eqref{BNE_equations_t} includes the expectations $\bbE_{i,t} \big[\bbE_{j,t}[\bbx]\big]$ of agent $i$ on the private signals' estimate of agent $j$. As per the induction hypothesis in \eqref{linear_estimates_first_one}, we have that the inner expectations can be written as $\bbE_{j,t}[\bbx] = L_{j,t}\bbx$. Using this fact, agent $i$'s expectation of agent $j$'s estimate of private signals becomes
\begin{equation} \label{i_j_estimate}
    \bbE_{i,t} \Big[\bbE_{j,t}[\bbx]\Big] = L_{j,t} \bbE_{i,t} [\bbx].
\end{equation}
Substituting \eqref{i_j_estimate} and the estimate induction hypotheses in \eqref{linear_estimates_first_one} for the corresponding terms in \eqref{BNE_equations_t} and \eqref{i_j_estimate}, and reordering terms yield the set of equations
\begin{equation}\label{BNE_equations_t_v1}
    \bbv_{i,t}^T L_{i,t} \bbx 
        = \sum_{j \in V\setminus\{i\}} 
             \beta_{ij}\bbv_{j,t}^T L_{j,t} L_{i,t} \bbx 
             + \delta \,  \bbk_{i,t}^T \bbx,
\end{equation}
At this point we recall that the equilibrium equations in \eqref{linear_reply} are true for all possible histories $h_{i,t}$. Therefore, the equilibrium equations in \eqref{BNE_equations_t_v1}, which are derived from \eqref{linear_reply}, have to hold irrespectively of the history's realization. This in turn means that they will be true for all possible values of $\bbx$. This can be ensured by equating the coefficients that multiply each component of $\bbx$ in \eqref{BNE_equations_t_v1} thereby yielding the relationships
\begin{equation}\label{BNE_equations_t_v2}
     L_{i,t}^T \bbv_{i,t} 
         = \sum_{j \in V\setminus\{i\}} \beta_{ij} L_{i,t}^T L^T_{j,t}\bbv_{j,t} 
           + \delta \,  \bbk_{i,t},
\end{equation}
that need to hold true for all agents $i$. The result in \eqref{BNE_matrix_form} is just a restatement of \eqref{BNE_equations_t_v2} with the latter corresponding to the $i$-th block of the relationship in \eqref{BNE_matrix_form}. \end{proof}

%
Lemma \ref{linear_equilibrium_at_time_t} provides a mechanism to determine the strategy profiles $\sigma^*_{i,t}(\cdot)$ of all agents through the computation of the action vectors $\bbv_{i,t}$ as a block of the vector $\bbv_t$ that solves \eqref{BNE_matrix_form}. We emphasize that the value of the weight vector $\bbv_t$ in \eqref{BNE_matrix_form} does not depend on the realization of private signals $\bbx$. This is as it should because the postulated equilibrium strategy in \eqref{linear_action} assumes the action weights $\bbv_{i,t}$ are independent of the observed history.  A consequence of this fact is that the action coefficients $\{\bbv_{i,t}\}_{i \in V}$ of all agents can be determined locally by all agents as long as the matrices $L_{i,t}$ and vectors $\bbv_{i,t}$ are common knowledge. The equilibrium actions $a^*_{i}(t)$, however, do depend on the observed history because to determine the action $a^*_{i}(t) = \sigma^*_{i,t}(h_{i,t}) = \bbv_{i,t}^T \bbE_{i,t}[\bbx]$ we multiply $\bbv_{i,t}^T$ by the expectation $\bbE_{i,t}[\bbx]$ associated with the actual observed history $h_{i,t}$. See Section \ref{QNG_filter} for details.

At time $t$ agent $i$ computes its action vector $\bbv_{i,t}$ which it uses to select the equilibrium action $a^*_{i}(t)=\bbv_{i,t}^T \bbE_{i,t}[\bbx]$ as per \eqref{linear_action}. Since we have also hypothesized that $\bbE_{i,t} \left[\bbx\right] = L_{i,t} \bbx$, as per \eqref{linear_estimates_first_one} the action of agent $i$ at time $t$ is given by
\begin{equation} \label{linear_action_assumption}
    a_{i}(t) = \bbv_{i,t}^T L_{i,t} \bbx .
\end{equation}
We emphasize that as in \eqref{linear_estimates_first_one} the expression in \eqref{linear_action_assumption} is not the computation made by agent $i$ but an equivalent computation from the perspective of an external omniscient observer.

The actions $\bba_{n(i)}(t):=[a_{j_{i,1}}(t),\ldots,a_{j_{i,d(i)}}(t)]^T \in \reals^{d(i) \times 1}$ of neighboring agents $j\in n(i)$ become part of the observed history $h_{i,t+1}$ of agent $i$ at time $t+1$ [cf. \eqref{eqn_history_definition}]. The important consequence of \eqref{linear_action_assumption} is that these observations are a linear combination of private signals $\bbx$. In particular, by defining the matrix $H_{i,t}^T:= [\bbv_{j_{i,1},t}^T L_{j_{i,1},t}; \ldots; \bbv_{j_{i,d(i)},t}^T L_{j_{i,d(i)},t}] \in \reals^{d(i) \times N}$ we can write
\begin{align} \label{observation_matrix}
   \bba_{n(i)}(t) = H_{i,t}^T \bbx 
                 := \left(\begin{array}{c}
                       \bbv_{j_{i,1},t}^T L_{j_{i,1},t} \\
                       \vdots\\
                        \bbv_{j_{i,d(i)},t}^T L_{j_{i,d(i)},t}
                     \end{array}\right)
                     \bbx .
\end{align}
Agent $i$'s belief of $\bbx$ at time $t$ is normally distributed; moreover, when we go from time $t$ to time $t+1$, agent $i$ observes a linear combination, $\bba_{n(i)}(t) = H_{i,t}^T \bbx$, of private signals. Thus, the propagation of the probability distribution when the history $h_{i,t+1}$ incorporates the actions $\bba_{n(i)}(t)$ is a simple sequential LMMSE estimation problem \cite[Ch. 12]{Kay}. In particular, the joint posterior distribution of $\bbx$ and $\theta$ given $h_{i,t+1}$ remains Gaussian and the expectations $\bbE_{i,t+1} \left[\bbx\right]$ and $\bbE_{i,t+1} \left[\theta\right]$ remain linear combinations of private signals $\bbx$ as in \eqref{linear_estimates_first_one} for some matrix $L_{i,t+1}$ and vector $\bbk_{i,t+1}$ which we compute explicitly in the following lemma.

%
\begin{lemma} \label{rational_belief_updates_theorem_L}
Consider a Bayesian game with quadratic utility as in \eqref{quadratic_utility_form} and the same assumptions and definitions of Lemma \ref{linear_equilibrium_at_time_t}. Further define the observation matrix $H_{i,t}^T:= [\bbv_{j_{i,1},t}^T L_{j_{i,1},t}; \ldots; \bbv_{j_{i,d(i)},t}^T L_{j_{i,d(i)},t}] \in \reals^{d(i) \times N}$ as in \eqref{observation_matrix} and the LMMSE gains
\begin{IEEEeqnarray}{rCl}
K^i_\bbx(t) & :=  & M^i_{\bbx \bbx} (t) H_{i,t}\big(H_{i,t}^T  M^i_{\bbx \bbx} (t) H_{i,t}\big)^{-1}, \label{eqn_lmmse_gain_x} \\
K^i_\theta(t) & := & M^i_{\theta \bbx} (t) H_{i,t} \big(H_{i,t}^T  M^i_{\bbx \bbx} (t) H_{i,t}\big)^{-1}, \label{eqn_lmmse_gain_theta}
\end{IEEEeqnarray}
and assume that agents play the linear equilibrium strategy in \eqref{linear_action}. 
Then, the beliefs $\bbP_{i,t+1}([\theta, \bbx^T])$ after observing neighboring actions at time $t$ are Gaussian with means that can be expressed as the linear combination of private signals 
\begin{align} \label{linear_estimates_next_time}
   \bbE_{i,t+1} \left[\bbx\right] = L_{i,t+1} \bbx, \qquad
   \bbE_{i,t+1} \left[\theta\right] = \bbk_{i,t+1}^T  \bbx,
\end{align}
where the matrix $L_{i,t+1}$ and vector $\bbk_{i,t+1}$ are given by
\begin{align}
L_{i,t+1} &= L_{i,t} +  K^i_\bbx(t) \Big(H_{i,t}^T -  H_{i,t}^T L_{i,t}\Big) \label{weights_recursion_x},\\
\bbk_{i,t+1}^T &= \bbk_{i,t}^T + K^i_\theta(t) \Big(H_{i,t}^T - H_{i,t}^T L_{i,t}\Big)\label{weights_recursion_theta}.
\end{align} 
The posterior covariance matrix $M^i_{\bbx \bbx} (t+1)$ for the private signals $\bbx$ the variance $M^i_{\theta \theta} (t+1)$ of the state $\theta$ and the cross covariance $M^i_{\theta \bbx} (t+1)$ are further given by
\begin{align}
M^i_{\bbx \bbx} (t+1) =& M^i_{\bbx \bbx} (t)- K^i_\bbx(t) H_{i,t}^T M^i_{\bbx \bbx} (t)\label{LMMSE_updates_covariance}, \\
M^i_{\theta \theta} (t+1) =& M^i_{\theta \theta} (t) - K_\theta^i(t)^T H_{i,t}^T M^i_{\bbx \theta} (t), \label{theta_covariance_estimate} \\
M^i_{\theta \bbx} (t+1) =& M^i_{\theta \bbx} (t) -  K^i_\theta(t) H_{i,t}^T M^i_{\bbx \bbx} (t).\label{theta_x_covariance_estimate}
\end{align}
\end{lemma}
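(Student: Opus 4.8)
The plan is to recognize the passage from $h_{i,t}$ to $h_{i,t+1}$ as a single linear--Gaussian measurement update and to execute it with the standard sequential LMMSE formulas \cite[Ch.~12]{Kay}. First I would record the two structural facts that make this possible. Since by assumption agents play the linear equilibrium strategy \eqref{linear_action} and, by the induction hypothesis \eqref{linear_estimates_first_one}, $\bbE_{j,t}[\bbx]=L_{j,t}\bbx$, equation \eqref{linear_action_assumption} gives $a_j(t)=\bbv_{j,t}^T L_{j,t}\bbx$ for every $j\in n(i)$; stacking these over the neighbours of $i$ produces the noiseless linear observation $\bba_{n(i)}(t)=H_{i,t}^T\bbx$ of \eqref{observation_matrix}. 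Second, $h_{i,t+1}=\{h_{i,t},\bba_{n(i)}(t)\}$, so $\bbP_{i,t+1}(\cdot)=\bbP_{i,t}(\cdot\mid\bba_{n(i)}(t))$: the time-$(t+1)$ belief is the time-$t$ belief conditioned on this one extra observation.

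Next I would examine the joint law, under $\bbP_{i,t}$, of the augmented vector $(\theta,\bbx^T,\bba_{n(i)}(t)^T)$. By hypothesis $(\theta,\bbx)$ is jointly Gaussian, and $\bba_{n(i)}(t)=H_{i,t}^T\bbx$ is a known linear image of $\bbx$; hence the augmented vector is jointly Gaussian, with
\begin{align*}
   \bbE_{i,t}[\bba_{n(i)}(t)]             &= H_{i,t}^T L_{i,t}\bbx,      &
   \text{Cov}_{i,t}(\bbx,\bba_{n(i)}(t))  &= M^i_{\bbx\bbx}(t)H_{i,t},    \\
   \text{Cov}_{i,t}(\theta,\bba_{n(i)}(t))&= M^i_{\theta\bbx}(t)H_{i,t},  &
   \text{Cov}_{i,t}(\bba_{n(i)}(t))       &= H_{i,t}^T M^i_{\bbx\bbx}(t)H_{i,t},
\end{align*}
and the innovation that $i$ actually sees is $\bba_{n(i)}(t)-\bbE_{i,t}[\bba_{n(i)}(t)]=\bigl(H_{i,t}^T-H_{i,t}^T L_{i,t}\bigr)\bbx$. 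Substituting these into the Gaussian conditioning identity $\bbE[\bbz\mid\bby]=\bbE[\bbz]+\text{Cov}(\bbz,\bby)\,\text{Cov}(\bby)^{-1}(\bby-\bbE[\bby])$, applied for $\bbz\in\{\bbx,\theta\}$, identifies the gains \eqref{eqn_lmmse_gain_x}--\eqref{eqn_lmmse_gain_theta}; then using $\bbE_{i,t}[\bbx]=L_{i,t}\bbx$ and $\bbE_{i,t}[\theta]=\bbk_{i,t}^T\bbx$ turns the identity into $\bbE_{i,t+1}[\bbx]=\bigl(L_{i,t}+K^i_\bbx(t)(H_{i,t}^T-H_{i,t}^T L_{i,t})\bigr)\bbx$ together with the analogous expression for $\theta$, which are exactly \eqref{weights_recursion_x}--\eqref{weights_recursion_theta}. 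Because $K^i_\bbx(t)$ and $K^i_\theta(t)$ are built from the covariance matrices only, which are history independent, the new means stay linear in $\bbx$; and $(\theta,\bbx)$ stays jointly Gaussian since conditioning the jointly Gaussian augmented vector on its $\bba_{n(i)}(t)$-block leaves the remaining block Gaussian. The covariance recursions \eqref{LMMSE_updates_covariance}--\eqref{theta_x_covariance_estimate} then follow from the companion identity $\text{Cov}(\bbz\mid\bby)=\text{Cov}(\bbz)-\text{Cov}(\bbz,\bby)\,\text{Cov}(\bby)^{-1}\text{Cov}(\bby,\bbz)$ fed the same quantities, after using the symmetries $M^i_{\bbx\bbx}(t)^T=M^i_{\bbx\bbx}(t)$ and $M^i_{\theta\bbx}(t)^T=M^i_{\bbx\theta}(t)$.

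The one point that needs care, rather than a real obstacle, is that the measurement is exact, so the innovation covariance $\text{Cov}_{i,t}(\bba_{n(i)}(t))=H_{i,t}^T M^i_{\bbx\bbx}(t)H_{i,t}$ need not be invertible in general -- indeed $M^i_{\bbx\bbx}(t)$ becomes singular once a first update has been made. One therefore either assumes $H_{i,t}$ has full column rank with $M^i_{\bbx\bbx}(t)$ positive definite on its range, or reads the inverse in \eqref{eqn_lmmse_gain_x}--\eqref{eqn_lmmse_gain_theta} as a pseudoinverse; with that understood, the remainder is the routine linear algebra of a Kalman/LMMSE update with no process or measurement noise. It is also worth recording that agent $i$ can perform this update locally, since $H_{i,t}$ is assembled from the common-knowledge data $\{\bbv_{j,t},L_{j,t}\}_{j\in n(i)}$ delivered by Lemma~\ref{linear_equilibrium_at_time_t}.
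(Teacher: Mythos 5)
Your proof is correct and follows essentially the same route as the paper's: both treat the observed neighboring actions as a noiseless linear--Gaussian measurement $\bba_{n(i)}(t)=H_{i,t}^T\bbx$ of the jointly Gaussian pair $(\theta,\bbx)$, apply the standard sequential LMMSE/Gaussian-conditioning update to identify the gains \eqref{eqn_lmmse_gain_x}--\eqref{eqn_lmmse_gain_theta}, and then substitute the induction hypothesis $\bbE_{i,t}[\bbx]=L_{i,t}\bbx$, $\bbE_{i,t}[\theta]=\bbk_{i,t}^T\bbx$ to read off the recursions \eqref{weights_recursion_x}--\eqref{theta_x_covariance_estimate}. Your added caveat that the innovation covariance $H_{i,t}^T M^i_{\bbx\bbx}(t)H_{i,t}$ may be singular, so the inverse should be read as a pseudoinverse or backed by a rank assumption, is a legitimate point the paper's proof passes over silently, but it does not change the argument.
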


%
\begin{proof} 
Since observations of $i$, $a_{n(i)}(t)$, are linear combinations of private signals $\bbx$ which are normally distributed, observations of $i$ are also normally distributed from the perspective of $i$. Furthermore, by assumption \eqref{linear_estimates_first_one}, the prior distribution $\bbP_{i,t}(\bbx)$ is Gaussian. Hence, the posterior distribution, $\bbP_{i,t+1}(\bbx)$, is also Gaussian. Specifically, the mean of the posterior distribution corresponds to the LMMSE estimator with gain matrix  $K^i_\bbx(t) =  M^i_{\bbx \bbx} (t) H_{i,t}\big(H_{i,t}^T  M^i_{\bbx \bbx} (t) H_{i,t}\big)^{-1}$; that is,
\begin{align} 
\bbE_{i,t+1} [\bbx] =&  \bbE_{i,t} \left[\bbx\right] +  K^i_\bbx(t)  \big(\bba_{n(i)}(t)  -\bbE_{i,t} [\bba_{n(i)}(t)]\big).  \label{LMMSE_updates_mean}
\end{align}

Because $\theta$ and $\bbx$ are jointly Gaussian at time $t$, $\theta$ and $\bba_{n(i)}(t)$ are also jointly Gaussian. Therefore, the posterior distribution $\bbP_{i,t+1}(\theta)$ is also Gaussian. Consequently, the Bayesian estimate of $\theta$ is given by a sequential LMMSE estimator with gain matrix  $K^i_\theta(t) = M^i_{\theta \bbx} (t) H_{i,t} \big(H_{i,t}^T  M^i_{\bbx \bbx} (t) H_{i,t}\big)^{-1}$,
\begin{align} 
\bbE_{i,t+1} \left[\theta\right] =& \bbE_{i,t} \left[\theta\right] + K^i_\theta(t)  \left( \bba_{n(i)}(t) - \bbE_{i,t}\left[\bba_{n(i)}(t)\right]\right) \label{theta_mean_estimate}.
\end{align}
Given the linear observation model in \eqref{observation_matrix}, agent $i$'s estimate of his observations at time $t$ is given by $\bbE_{i,t} (\bba_{n(i)}(t) )=  H_{i,t}^T \bbE_{i,t}[\bbx]$. Substituting \eqref{linear_estimates_first_one} for the mean estimates at time $t$ in \eqref{LMMSE_updates_mean} and \eqref{theta_mean_estimate}, we obtain
\begin{align}\bbE_{i,t+1}\left[\bbx\right] &= L_{i,t} \bbx +  K^i_\bbx(t) \left(H_{i,t}^T \bbx-  H_{i,t}^T L_{i,t} \bbx\right),\label{LMMSE_updates_mean_10}
\\
\bbE_{i,t+1}\left[\theta \right]  &= \bbk_{i,t}^T \bbx + K^i_\theta(t) \left(H_{i,t}^T \bbx - H_{i,t}^T L_{i,t} \bbx \right)\label{theta_mean_estimate_10}
.
\end{align}
Grouping the terms that multiply $\bbx$ on the right hand side of the two equations, we observe that $\bbE_{i,t+1}\left[\bbx\right]  = L_{i,t+1} \bbx$ and $\bbE_{i,t+1}\left[\theta\right]  = \bbk_{i,t+1}^T \bbx$ where $L_{i,t+1}$ and $\bbk_{i,t+1}$ are as defined in \eqref{weights_recursion_x} and \eqref{weights_recursion_theta}. Similarly,  the updates for error covariance matrices are as given in \eqref{LMMSE_updates_covariance}--\eqref{theta_x_covariance_estimate} following standard LMMSE updates  \cite[Ch. 12]{Kay}. 
\end{proof}

%
In the repeated game we are considering, agents determine optimal actions given available information and determine the information that is revealed by neighboring actions. These questions are respectively answered by lemmas \ref{linear_equilibrium_at_time_t} and \ref{rational_belief_updates_theorem_L} under the inductive hypotheses of Gaussian beliefs and linear estimates as per \eqref{linear_estimates_first_one}. The answer provided by Lemma \ref{rational_belief_updates_theorem_L} also shows that the inductive hypotheses hold true at time $t+1$ and provides an explicit recursion to propagate the mean and variance of the beliefs posterior to the observation of neighboring actions. This permits closing the inductive loop to establish the following theorem for recursive computation of BNE of repeated games with quadratic payoffs.

%
\begin{theorem} \label{rational_updates_theorem}
Consider a repeated Bayesian game with the quadratic utility function in \eqref{quadratic_utility_form} and assume that linear strategies $\sigma^*_{i,t}(h_{i,t}) = \bbv_{i,t}^T \bbE_{i,t}[\bbx]$ as in \eqref{linear_action} exist for all times $t$. Then, the action coefficients $\bbv_{i,t}$ can be computed by solving the system of linear equations in \eqref{BNE_matrix_form} with $\bbv_t:= [\bbv_{1,t}^T, \ldots, \bbv_{N,t}^T]^T$, $\bbk_t :=  [\bbk_{1,t}^T, \ldots, \bbk_{N,t}^T]^T$ and $L_t$ as in \eqref{multipliers_matrix}. The matrices $L_{i,t}$ and the vectors $\bbk_{i,t}$ are computed by recursive application of \eqref{eqn_lmmse_gain_x}-\eqref{eqn_lmmse_gain_theta} and \eqref{weights_recursion_x}-\eqref{theta_x_covariance_estimate} with initial values 
\begin{align} \label{initial_weights}
   L_{i,0}    = \bbone\bbe_i^T, \qquad
   \bbk_{i,0} = \bbe_i. 
\end{align}
The initial covariance matrix $M^i_{\bbx \bbx} (0)$, initial variance $M^i_{\theta \theta} (0)$, and initial cross covariance $M^i_{\theta \bbx} (0)$ are given by 
\begin{align} \label{initial_variances}
   & M^i_{\bbx \bbx} (0)    
       = {\rm diag}(\bar\bbe_i){\rm diag} (\bbc) + \bar\bbe_i\bar\bbe_i^T c_i, \hspace{5mm}  
    M^i_{\theta \theta}(0) = c_{i},           \hspace{5mm}
    M^i_{\theta \bbx}(0)   = c_i\bar\bbe_i^T. 
\end{align} \end{theorem}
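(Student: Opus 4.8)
The plan is to prove the theorem by induction on the stage index $t$, with Lemmas \ref{linear_equilibrium_at_time_t} and \ref{rational_belief_updates_theorem_L} carrying out the inductive step and a direct moment computation from the signal model \eqref{private_signal_model} establishing the base case. The inductive hypothesis at stage $t$ is exactly the hypothesis common to the two lemmas: for every agent $i$ the joint belief $\bbP_{i,t}([\theta,\bbx^T])$ is Gaussian, its means satisfy $\bbE_{i,t}[\bbx]=L_{i,t}\bbx$ and $\bbE_{i,t}[\theta]=\bbk_{i,t}^T\bbx$, and the second-order moments $M^i_{\bbx\bbx}(t)$, $M^i_{\theta\theta}(t)$, $M^i_{\theta\bbx}(t)$ are the quantities produced by the recursion \eqref{eqn_lmmse_gain_x}--\eqref{eqn_lmmse_gain_theta}, \eqref{weights_recursion_x}--\eqref{theta_x_covariance_estimate}. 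Once the base case is in place, the theorem's own standing assumption that a linear equilibrium \eqref{linear_action} exists at every $t$ makes Lemma \ref{linear_equilibrium_at_time_t} applicable at each stage, and Lemma \ref{rational_belief_updates_theorem_L} then reproduces the hypothesis at $t+1$, closing the loop.

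\emph{Base case $t=0$.} Here $h_{i,0}=\{x_i\}$, so $\bbP_{i,0}$ is the conditional law of $(\theta,\bbx)$ given the single equation $x_i=\theta+\eps_i$ from \eqref{private_signal_model}. Treating $\theta$ as having a non-informative prior (equivalently, as the limit of a diffuse Gaussian prior, which is the only reading consistent with the stated initial values), the conditional law of $\theta$ given $x_i$ is Gaussian with mean $x_i$ and variance $c_i$; since $x_i=\bbe_i^T\bbx$ this gives $\bbk_{i,0}=\bbe_i$ and $M^i_{\theta\theta}(0)=c_i$ as in \eqref{initial_weights}--\eqref{initial_variances}. For $j\neq i$ the noise $\eps_j$ is zero-mean and independent of $(\theta,\eps_i)$, hence of $x_i$, so $\bbE_{i,0}[x_j]=\bbE_{i,0}[\theta]=x_i$, while trivially $\bbE_{i,0}[x_i]=x_i$; stacking these yields $\bbE_{i,0}[\bbx]=x_i\bbone=(\bbone\bbe_i^T)\bbx$, i.e.\ $L_{i,0}=\bbone\bbe_i^T$. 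The covariances follow the same way: the $i$-th coordinate of $\bbx$ is known exactly, so the $i$-th row and column of $M^i_{\bbx\bbx}(0)$ and the $i$-th entry of $M^i_{\theta\bbx}(0)$ vanish, while for $j,k\neq i$ one has $\Cov_{i,0}(\theta,x_j)=\Cov_{i,0}(\theta,\theta)=c_i$ and $\Cov_{i,0}(x_j,x_k)=\Cov_{i,0}(\theta,\theta)+\ind{j=k}c_j=c_i+\ind{j=k}c_j$; collecting terms reproduces $M^i_{\bbx\bbx}(0)={\rm diag}(\bar\bbe_i){\rm diag}(\bbc)+\bar\bbe_i\bar\bbe_i^Tc_i$ and $M^i_{\theta\bbx}(0)=c_i\bar\bbe_i^T$. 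Gaussianity of the joint belief is inherited because $(\theta,\bbx)$ is jointly Gaussian and conditioning a Gaussian vector on a linear functional preserves Gaussianity.

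\emph{Inductive step.} Suppose the hypothesis holds at stage $t$. By the theorem's standing assumption a linear equilibrium \eqref{linear_action} exists at $t$, so Lemma \ref{linear_equilibrium_at_time_t} applies verbatim and the action weights $\bbv_t=[\bbv_{1,t}^T,\dots,\bbv_{N,t}^T]^T$ solve $L_t\bbv_t=\delta\bbk_t$ with $L_t$ assembled from the current $L_{i,t}$ and the $\beta_{ij}$ as in \eqref{multipliers_matrix}. The equilibrium action is then $a^*_i(t)=\bbv_{i,t}^T\bbE_{i,t}[\bbx]=\bbv_{i,t}^TL_{i,t}\bbx$, so the neighboring actions collected by agent $i$ at the end of stage $t$ form the linear observation $\bba_{n(i)}(t)=H_{i,t}^T\bbx$ with $H_{i,t}$ as in \eqref{observation_matrix}. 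All remaining hypotheses of Lemma \ref{rational_belief_updates_theorem_L} are precisely the inductive hypothesis at $t$, so that lemma applies and yields: $\bbP_{i,t+1}([\theta,\bbx^T])$ Gaussian; means $\bbE_{i,t+1}[\bbx]=L_{i,t+1}\bbx$, $\bbE_{i,t+1}[\theta]=\bbk_{i,t+1}^T\bbx$ with $L_{i,t+1},\bbk_{i,t+1}$ given by \eqref{weights_recursion_x}--\eqref{weights_recursion_theta}; and $M^i_{\bbx\bbx}(t+1)$, $M^i_{\theta\theta}(t+1)$, $M^i_{\theta\bbx}(t+1)$ given by \eqref{LMMSE_updates_covariance}--\eqref{theta_x_covariance_estimate}. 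This is the inductive hypothesis at $t+1$, which completes the induction.

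The only place where anything beyond bookkeeping happens is the base case, so I expect the main obstacle to be pinning down the initial prior on $\theta$ and verifying the \emph{degenerate} (rank-deficient) forms of $M^i_{\bbx\bbx}(0)$ and $M^i_{\theta\bbx}(0)$ that arise because agent $i$ knows $x_i$ exactly. A related technical caveat, inherited from the gains \eqref{eqn_lmmse_gain_x}--\eqref{eqn_lmmse_gain_theta}, is that the Gram matrix $H_{i,t}^TM^i_{\bbx\bbx}(t)H_{i,t}$ must be invertible for the recursion to be literally well defined; since $M^i_{\bbx\bbx}(t)$ is singular this requires either a genericity condition on the $L_{i,t}$ and $\bbv_{i,t}$ or replacing the inverse by a pseudo-inverse, exactly the issue already present in Lemma \ref{rational_belief_updates_theorem_L} and not created anew here.
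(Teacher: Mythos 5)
Your proof is correct and follows essentially the same route as the paper's: a direct moment computation from \eqref{private_signal_model} establishes the $t=0$ initializations \eqref{initial_weights}--\eqref{initial_variances} (your entrywise covariance calculation $\Cov_{i,0}(x_j,x_k)=c_i+\ind{j=k}c_j$ for $j,k\neq i$ is just the coordinate form of the paper's expansion of $\bbE_{i,0}[(\bbepsilon-\bbone\eps_i)(\bbepsilon-\bbone\eps_i)^T]$), and the induction is then closed by invoking Lemmas \ref{linear_equilibrium_at_time_t} and \ref{rational_belief_updates_theorem_L} exactly as the paper does. Your two added caveats --- that the stated initial values implicitly presuppose a diffuse prior on $\theta$, and that the gains \eqref{eqn_lmmse_gain_x}--\eqref{eqn_lmmse_gain_theta} require $H_{i,t}^TM^i_{\bbx\bbx}(t)H_{i,t}$ to be invertible despite $M^i_{\bbx\bbx}(t)$ being singular --- are legitimate points the paper leaves implicit, not gaps in your argument.
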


%
\begin{proof} See Appendix \ref{Theorem_1_proof}.
\end{proof}

%

According to Theorem \ref{rational_updates_theorem}, the beliefs on $\theta$ and $\bbx$ remain Gaussian for all agents and all times when agents play according to a linear equilibrium strategy as in \eqref{linear_action} at each stage. Theorem \ref{rational_updates_theorem} also provides a recursive mechanism to compute the coefficients $\bbv_{i,t}$ of the linear BNE strategies $\sigma^*_{i,t}(h_{i,t}) = \bbv_{i,t}^T \bbE_{i,t}[\bbx]$ and the coefficients $L_{i,t}$ and $\bbk_{i,t}$ that determine the LMMSE estimates as per \eqref{linear_estimates_first_one}. However, these latter expressions cannot be used by agent $i$ to calculate estimates $\bbE_{i,t} \left[\bbx\right]$ and $\bbE_{i,t} \left[\theta\right]$ unless the private signals $\bbx$ are exactly known, which will absolve agent $i$ from responsibility of the estimation process entirely. Since the BNE action $a^*_{i}(t)=\sigma^*_{i,t}(h_{i,t}) = \bbv_{i,t}^T \bbE_{i,t}[\bbx]$ depends on having the observed private signal estimate $\bbE_{i,t}[\bbx]$ available, Theorem \ref{rational_updates_theorem} does not provide a way of computing the optimal action either. This mismatch can be solved by writing the LMMSE updates in a different form as we show in the next section after the following remark.

\begin{remark} \normalfont
Results in this paper assume the system of linear equations in \eqref{BNE_matrix_form} has a unique solution. If the solution is not unique, a prior agreement is necessary for agents to play consistent strategies. E.g., agents could agree beforehand to select the vector $\bbv_t$ with minimum Euclidean norm. If \eqref{BNE_matrix_form} does not have a solution, it means that the equilibrium strategies of the form in \eqref{linear_action_assumption} do not exist. A sufficient condition for this {\it not} to happen is to have a strictly diagonally dominant utility function which in explicit terms we write $\sum_{j\in V\setminus\{i\}} |\beta_{ij}| < 1$. In this case Gershgorin's Theorem implies that $L_t$ is full rank because it has no null eigenvalues. Laxer conditions to guarantee existence of linear equilibria as in \eqref{linear_action_assumption} can be found in, e.g., \cite{ui2009bayesian,radner1962team}. In all of our numerical experiments solutions to \eqref{BNE_matrix_form} exist and are unique.
\end{remark}
 


\begin{figure*} 
\centering

\def \lbra {[} \def \rbra {]}

\tikzstyle{empty} = [rectangle, draw=black, inner sep=0pt, minimum size=0cm, fill=blue!10]
\tikzstyle{box} = [rectangle, draw=black, inner sep=0pt, minimum size=0.9cm, fill=blue!10]
\tikzstyle{hugebox} = [rectangle, draw=black, inner sep=0pt, minimum width = 8.2cm, minimum height = 6.6cm,dashed]
\tikzstyle{agent} = [circle, draw=black, inner sep=0pt, minimum size=0.7cm, fill=red!10]
\tikzstyle{arrow} = [->]
{\begin{tikzpicture}[scale=1.2]
\path    ( 0,    0)   node (0) [rectangle, draw=black, inner sep=0pt, minimum size=0cm, label= left: $\bba_{n(i)}(t)$]  {}
      ++  (1,   0)     node (1) [agent]  {$\sum$}
      ++ ( 2,  0 )       node (2) [box]  {$K_\bbx^i$}
      ++ ( 2,  0)   node (3) [agent]  {$\sum$}
  ++ ( 1,   0) node (4) [circle, draw=black,inner sep=0pt, minimum size=0cm, label=above: $\bbE_{i,t}\lbra\bbx\rbra$] {}
        ++ ( 2,   0)    node (7) [box]  {$\bbv_{i,t}^T$}
        ++ ( 1,   0)     node (8) [rectangle, draw=black, inner sep=0pt, minimum size=0cm, label=right:$a_{i}(t)$]  {}
              ++ ( -3,   -2)    node (5) [empty]  {}
      ++ ( -1,   0)    node (6) [empty]  {}
++ ( -2,   0)     node (9) [box]{$-H_{i,t}^T$}
         ++ ( -2,   0)     node (10) [rectangle, draw=black, inner sep=0pt, minimum size=0cm, label=below:$-\bbE_{i,t}\lbra \bba_{n(i)}(t)\rbra$] {}
         ++ (2, 3) node (11)  [rectangle, draw=black, inner sep=0pt, minimum size=0cm, label=above:$M_{\bbx\bbx}^i(t)$]{}
         ++ (5, 0) node (12)[circle, draw=black,inner sep=0pt, minimum size=0cm, label=above: $\{\bbk_{j,t}\}_{j \in V}$]{}
         ++ (0, -2) node (13) [circle, draw=black,inner sep=0pt, minimum size=0cm, label=below: $\{L_{j,t}\}_{j \in V}$]{}
         ++ (-10, 1) node (14) [box]{$H_{i,t}^T$}
	++ (-1, 0) node (15)[circle, draw=black,inner sep=0pt, minimum size=0cm, label=above: $\bbx$] {}
	++ (1, 1) node (16)[circle, draw=black,inner sep=0pt, minimum size=0cm, label=above: $\{\bbv_{j,t}\}_{j \in n(i)}$] {}
	++ (0, -2) node (17)[circle, draw=black,inner sep=0pt, minimum size=0cm, label=below: $\{L_{j,t}\}_{j \in n(i)}$] {}
	++ (2, 1) node (18)[empty] {}
	++ (-1.1, 0) node (19)[empty] {}
	++ (+3, 0) node (20)[empty] {}
	++ (+0, 2.5) node (21)[empty] {}
	++ (+1, 0) node (22)[box] {$K_\theta^i$}
	++ (+2, 0) node (23)[agent] {$\sum$}
	++ (+1, 0) node (24)[circle, draw=black,inner sep=0pt, minimum size=0cm, label=above: $\bbE_{i,t}\lbra\theta\rbra$]{}
	++ (0, -1) node (25)[empty]{}
	++ (-1, 0) node (26)[empty]{}
	++ (3, 1) node (27)[empty]{}
	++ (-4.4, -2.2) node (32)[hugebox]{};
	
\draw [arrow, black] (0)  to (1);
\draw [arrow, black] (1)  to (2);
\draw [arrow, black] (2)  to (3);
\draw [line width = 0.2mm, -] (3)  to (4);
\draw [line width = 0.2mm, -] (4)  to (5);
\draw [line width = 0.2mm, -] (5)  to (6);
\draw [arrow, black] (6)  to (3);
\draw [arrow, black] (4)  to (7);
\draw [arrow, black] (7)  to (8);
\draw [arrow, black] (6)  to (9);
\draw [line width = 0.2mm, -] (9)  to (10);
\draw [arrow, black] (10)  to (1);
\draw [arrow, black] (11)  to (2);
\draw [arrow, black] (12)  to (7);
\draw [arrow, black] (13)  to (7);
\draw [line width = 0.2mm, -] (18)  to (0);
\draw [arrow, black] (15)  to (14);
\draw [arrow, black] (16)  to (14);
\draw [arrow, black] (17)  to (14);
\draw [arrow, black] (17)  to (14);
\draw [arrow, black] (14)  to (19);
\draw [line width = 0.2mm, -] (20)  to (21);
\draw [arrow, black] (21)  to (22);
\draw [arrow, black] (22)  to (23);
\draw [line width = 0.2mm, -] (23)  to (24);
\draw [line width = 0.2mm, -] (24)  to (25);
\draw [line width = 0.2mm, -] (25)  to (26);
\draw [arrow, black] (26)  to (23);
\draw [arrow, black] (24)  to (27);
\end{tikzpicture}}\vspace{-3mm}
\caption{Quadratic Network Game (QNG) filter at agent $i$. There are two types of blocks, circle and rectangle. Arrows coming into the circle block are
summed. The arrow that goes into a rectangle block is multiplied by the coefficient written inside the block. Inside the dashed box agent $i$'s mean estimate updates on $\bbx$ and $\theta$ are illustrated (cf. (\ref{LMMSE_updates_mean_filter}) and (\ref{LMMSE_updates_state_mean_filter})). The gain coefficients for the mean updates are fed from LMMSE block in
Fig. 2. The observation matrix $H_{i,t}$ is fed from the game block in Fig. 2. Agent $i$ multiplies his mean estimate on $\bbx$ at time $t$ with action coefficient $\bbv_{i,t}$, which is fed from game block in Fig. 2, to obtain $a_{i}(t)$. The mean estimates $\bbE_{i,t}[\bbx]$ and $a_{i}(t)$ can
only be calculated by agent $i$.}\vspace{-3mm}
\label{feedback_diagram}
\end{figure*}
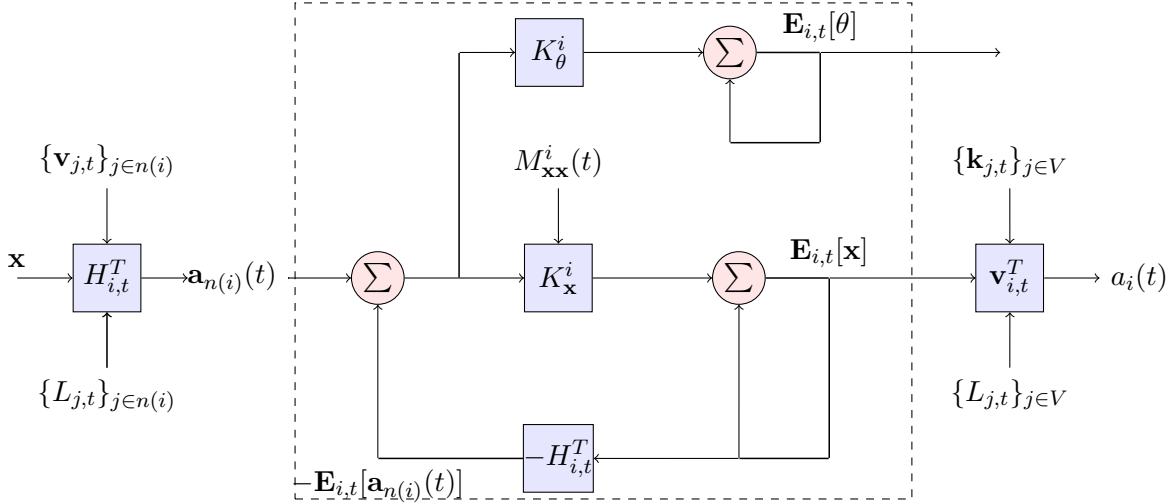

\section{Quadratic Network Game Filter} \label{QNG_filter}

To compute and play BNE strategies each node runs the quadratic network game (QNG) filter that we derive in this section. Since agent $i$ cannot use \eqref{linear_estimates_first_one}, we need an alternative means of computing estimates $\bbE_{i,t} \left[\bbx\right]$ and $\bbE_{i,t} \left[\theta\right]$. To do this refer to the transformation of \eqref{LMMSE_updates_mean} and \eqref{theta_mean_estimate} into \eqref{LMMSE_updates_mean_10} and \eqref{theta_mean_estimate_10} in the proof of Lemma \ref{rational_belief_updates_theorem_L}. In this transformation we substitute the observed neighboring actions $\bba_{n(i)}(t)$ for their model $\bba_{n(i)}(t) = H_{i,t}^T \bbx$ and write the expectation of these actions as
$H_{i,t}^T \bbE_{i,t}[\bbx]$ with the further substitution $\bbE_{i,t} \left[\bbx\right] = L_{i,t} \bbx$. As a result we can rewrite \eqref{LMMSE_updates_mean} and \eqref{theta_mean_estimate} as
\begin{alignat}{4} 
   &\bbE_{i,t+1} [\bbx]   
      =\ && \bbE_{i,t} \left[\bbx\right] 
         &&+ K^i_\bbx(t)&&\big(\bba_{n(i)}(t)-H_{i,t}^T\bbE_{i,t}[\bbx]\big),
               \label{LMMSE_updates_mean_filter} \\
   &\bbE_{i,t+1} [\theta] 
      =\ && \bbE_{i,t} \left[\theta\right] 
         &&+  K^i_\theta(t)  &&\big(\bba_{n(i)}(t)- H_{i,t}^T\bbE_{i,t}[\bbx]\big).  
               \label{LMMSE_updates_state_mean_filter}
\end{alignat}
The updates in \eqref{LMMSE_updates_mean_filter} and \eqref{LMMSE_updates_state_mean_filter} can be implemented locally by agent $i$ since they depend on the previous values $\bbE_{i,t}[\bbx]$ and $\bbE_{i,t}[\theta]$ of the LMMSE estimates, and the observed neighboring actions $\bba_{n(i)}(t)$. They can be combined with the coefficient recursions in \eqref{BNE_matrix_form}, \eqref{eqn_lmmse_gain_x}-\eqref{eqn_lmmse_gain_theta}, and \eqref{weights_recursion_x}-\eqref{theta_x_covariance_estimate} as well as with the BNE strategy expression in \eqref{linear_action} to recursively compute the equilibrium actions $a^*_{i}(t)$ given the observed history $h_{i,t}$.

The updates in \eqref{linear_action}, \eqref{BNE_matrix_form}, \eqref{eqn_lmmse_gain_x}-\eqref{eqn_lmmse_gain_theta}, \eqref{weights_recursion_x}-\eqref{theta_x_covariance_estimate}, and \eqref{LMMSE_updates_mean_filter}-\eqref{LMMSE_updates_state_mean_filter} form the QNG filter. In the QNG filter agent $i$ performs a full network simulation in which it maintains a belief $\bbP_{i,t}([\theta, \bbx^T])$ on the state of the world $\theta$ and the private signals $\bbx$ of all agents. This implies performing the coefficient updates \eqref{BNE_matrix_form}, \eqref{eqn_lmmse_gain_x}-\eqref{eqn_lmmse_gain_theta}, \eqref{weights_recursion_x}-\eqref{theta_x_covariance_estimate} for all agents in the network. This he can do because the network topology and private signal models are common knowledge. The updates \eqref{linear_action} and \eqref{LMMSE_updates_mean_filter}-\eqref{LMMSE_updates_state_mean_filter} are performed for agent $i$'s own index only. 

The signal updates on \eqref{LMMSE_updates_mean_filter}-\eqref{LMMSE_updates_state_mean_filter} are illustrated inside the dashed box in Fig. \ref{feedback_diagram}. At time $t$, the inputs to the filter are the observed actions $\bba_{n(i)}(t)$ of agent $i$'s neighbors. The prediction $\bbE_{i,t}[\bba_{n(i)}(t)]=H_{i,t}\bbE_{i,t}[\bbx]$ of this vector is subtracted from the observed value and the resultant error is fed into two parallel blocks respectively tasked with updating the belief $\bbE_{i,t}[\theta]$ on the state of the world $\theta$, and the belief $\bbE_{i,t}[\bbx]$ on the private signals $\bbx$ of other agents. The error $\bba_{n(i),t}-\bbE_{i,t}[\bba_{n(i),t}]$ is multiplied by the gain $K^i_\bbx(t)$ and the resultant innovation is added to the previous mean estimate to correct the estimate of $\bbx$ [cf. \eqref{LMMSE_updates_mean_filter}]. Similarly, the error is multiplied by the gain $K^i_\theta(t)$ and the resultant innovation is added to the previous mean estimate to correct the estimate of $\theta$ at $i$ [cf. \eqref{LMMSE_updates_state_mean_filter}]. In order to determine the equilibrium play as per \eqref{linear_action}, agent $i$ multiples his private signal estimate $\bbE_{i,t}[\bbx]$ by the vector $\bbv_i(t)$ obtained by solving the system of linear equations in \eqref{BNE_matrix_form}.

Observe that in the QNG filter, we do not use the fact that estimates $\bbE_{i,t} \left[\theta\right]$ and $\bbE_{i,t}[\bbx]$ as well as actions $a_{i,t}$ can be written as linear combinations of the private signals [cf. \eqref{linear_estimates_first_one} and \eqref{linear_action_assumption}]. While the expressions in \eqref{linear_estimates_first_one} and \eqref{linear_action_assumption} are certainly correct, they cannot be used for implementation because $\bbx$ is only partially unknown to agent $i$. The role of \eqref{linear_estimates_first_one} and \eqref{linear_action_assumption} is to allow derivation of recursions that we use to keep track of the gains used in the QNG filter. These recursions can be divided into a group of LMMSE updates and a group of game updates as we show in Fig. \ref{gains_diagram}.

As it follows from \eqref{eqn_lmmse_gain_x}-\eqref{eqn_lmmse_gain_theta} and \eqref{LMMSE_updates_covariance}-\eqref{theta_x_covariance_estimate}, the update of LMMSE coefficients is identical to the gain and covariance updates of a sequential LMMSE. The only peculiarity is that the observation matrix $H_{j,t}$ is fed from the game update block and is partially determined by the LMMSE gains and covariances of previous iterations. Nevertheless, this peculiarity is more associated with the game block than with the LMMSE block. The game block uses \eqref{weights_recursion_x} and \eqref{weights_recursion_theta} to keep track of the matrices $L_{j,t}$ and the vectors $\bbk_{j,t}$. The matrices $L_{j,t}$ are used as building blocks of the matrix $L_t$ and the vectors $\bbk_{j,t}$ are stacked in the vector $\bbk_{t}$ and used to formulate the systems of equations in \eqref{BNE_matrix_form}. Solving this system of equations, using $L_t^{-1}$ when it is full rank or its pseudo inverse when it is not, yields the coefficients $\bbv_{j,t}$ which in turn determine the observation matrix $H_{j,t}$ as per \eqref{observation_matrix}. As mentioned before, the game block feeds the matrices $H_{j,t}$ to the filter block as they are used in the LMMSE gains and covariance updates. The LMMSE block feeds the gains $K^j_\bbx(t)$ and $K^j_\theta(t)$ to the game block as these are needed to update  $L_{j,t}$ and $\bbk_{j,t}$. 

We remark that agent $i$ is keeping track of the matrices and vectors in Fig. \ref{gains_diagram} for all $i \in V$. I.e., agent $i$ calculates observation matrices $H_{j,t}$ for $j \in V$ in the game block which are fed into the LMMSE block to obtain gains matrices $K^j_\bbx(t)$ and $K^j_\theta(t)$ for all $j \in V$. These gains are fed into the game block from the LMMSE block as they are needed to update  $L_{j,t}$ and $\bbk_{j,t}$ for all $j \in V$. The reason for this is the step in the game block in which we compute the play coefficients $\bbv_{j,t}$. To solve this system of equations, agent $i$ needs to build the matrix $L_t$ that is formed by the blocks $L_{j,t}$ of all agents. All of these computations for the coefficients of other agents are internal to agent $i$ and independent of the game realization. The gains can be computed offline prior to running the game. 


\begin{remark} \normalfont
The QNG filter can also be used in repeated games with purely informational externalities. In this case each agent's payoff is given by $u(\theta, a_i)  = -(\theta - a_i)^2$, and the problem is thus equivalent to the distributed estimation of the world state $\theta$ \cite{mossel2010efficient}. Our model subsumes the games with purely informational externalities as a special case. Given this payoff function, the best response of agent $i$ at time $t$ is the action $a_i(t) = \mathbf{E}_{i,t}[\theta]$. Hence, it is not necessary to solve \eqref{BNE_matrix_form} for the optimal strategy coefficients $\bbv_{i,t}$. Other than this the QNG filter remains unchanged. Since in the case of purely informational externalities the end goal is the estimation of $\theta$, the QNG filter is tantamount to an optimal distributed implementation of a Kalman filter.
\end{remark}

\begin{figure*} 
\centering

\tikzstyle{name box}= [draw, text centered, anchor=north east,
  						       minimum height = 1.2cm, minimum width = 1.3cm, text width=1.1cm, fill=blue!10]
\tikzstyle{shorter name box}= [draw, text centered, anchor=north east,
  						       minimum height = 1.2cm, minimum width = 1.1cm, text width=0.9cm, fill=red!10]

\tikzstyle{equation box} = [draw, text justified, minimum height = 1.2cm, minimum width = 6.6cm, anchor = west, text width=6.2cm,fill=blue!10]

\tikzstyle{shorter equation box} = [draw, text justified, minimum height = 1.2cm, minimum width = 6.2cm, anchor = west, text width=5.8cm, fill=red!10]

{\fontsize{8}{8}\selectfont \begin{tikzpicture}[scale=0.9]

  \node                    [shorter name box] (G1) {Variable};     
  \node at (G1.south east) [shorter name box] (G2) {$L_{j,t}$};   
  \node at (G2.south east) [shorter name box] (G3) {$\bbk_{j,t}$};
  \node at (G3.south east) [shorter name box, fill=red!20] (G4) {$\bbv_{j,t}$};
  \node at (G4.south east) [shorter name box] (G5) {$H_{j,t}$};
  \path    (G1.north east) ++ (2.5,0.5) node {Game coefficients};

  \node at (G1.east) [shorter equation box] (G1) {Update};
  \node at (G2.east) [shorter equation box] (G2)
      {$L_{j,t+1} = L_{j,t} +  K^j_\bbx(t) \Big(H_{j,t}^T- H_{j,t}^TL_{j,t} \Big)$ \hfill
       \eqref{weights_recursion_x}};   
  \node at (G3.east) [shorter equation box] (G3)
      {$\bbk_{j,t+1}^T = \bbk_{j,t}^T + K^j_\theta(t) \Big(H_{j,t}^T - H_{j,t}^T L_{j,t}\Big)$ \hfill
       \eqref{weights_recursion_theta}};
  \node at (G4.east) [shorter equation box, thick, fill=red!20] (G4)
      {$ L_t  \bbv_t = \delta \bbk_t$\hfill
       \eqref{BNE_matrix_form}};
  \node at (G5.east) [shorter equation box] (G5)
      {$H_{j,t}:= \hspace{-1mm}\big[\bbv_{k_{j,1},t}^T L_{k_{j,1},t}; \ldots; \bbv_{k_{j,d(j)},t}^TL_{k_{j,d(j)},t} ]^T$ \hfill
       \eqref{observation_matrix}};

  \node at (10,0)          [name box] (F1) {Variable};     
  \node at (F1.south east) [name box] (F2) {$K^j_\bbx(t)$};   
  \node at (F2.south east) [name box] (F3) {$K^j_\theta(t)$};
  \node at (F3.south east) [name box] (F4) {$M^j_{\bbx \bbx} (t)$};
  \node at (F4.south east) [name box] (F5) {$M^j_{\theta \bbx} (t)$};   
  \path    (F1.north east) ++ (2.5,0.5) node {LMMSE coefficients};

  \node at (F1.east) [equation box] {Update};
  \node at (F2.east) [equation box] 
      {$K^j_\bbx(t) = M^j_{\bbx\bbx}(t)H_{j,t}\Big(H_{j,t}^T M^j_{\bbx\bbx}(t)H_{j,t}\Big)^{-1}$ \hfill
       \eqref{eqn_lmmse_gain_x}};   
  \node at (F3.east) [equation box] 
      {$K^j_\theta(t) = M^j_{\theta\bbx}(t)H_{j,t}
         \Big(H_{j,t}^TM^j_{\bbx \bbx}(t)H_{j,t}\Big)^{-1}$ \hfill
         \eqref{eqn_lmmse_gain_theta}};
  \node at (F4.east) [equation box]
      {$M^j_{\bbx \bbx} (t+1) 
         = M^j_{\bbx \bbx} (t)- K^j_\bbx(t) H_{j,t}^T M^j_{\bbx \bbx} (t)$\hfill
       \eqref{LMMSE_updates_covariance}};
  \node at (F5.east) [equation box] (FF5)
      {$M^j_{\theta \bbx} (t+1) 
         \ =\ M^j_{\theta \bbx} (t) -  K^j_\theta(t) H_{j,t}^T M^j_{\bbx \bbx} (t)$ \hfill
       \eqref{theta_x_covariance_estimate}};

   \path[red,-stealth,shorten >= 2pt, shorten <= 2pt, thick]  
        (G5.east) edge [above] node {$H_{j,t}$} (F5.west) ; 
   \path[blue,-stealth,shorten >= 2pt, shorten <= 2pt, thick]  
        (F2.west) edge [above] node {$K^j_\bbx(t)$} (G2.east) ; 
   \path[blue,-stealth,shorten >= 2pt, shorten <= 2pt, thick]  
        (F3.west) edge [above] node {$K^j_\theta(t)$} (G3.east) ; 

   \path[red, draw, -stealth,shorten >= 2pt, shorten <= 2pt, thick]  
        (G5.south) ++ (-1.1,0) -- ++ (0,-1.0)node [left] {$\bbv_{i,t}$} -- ++ (0,-0.5);     
   \path[red, draw, -stealth,shorten >= 2pt, shorten <= 2pt, thick]  
        (G5.south) ++ (+1.1,0) -- ++ (0,-1.0)node [left] {$H_{i,t}$} -- ++ (0,-0.5);
   \path(G5.south) ++ (+1.1,0) ++ (0,-1.8) node {to QNG  filter};
   \path(G5.south) ++ (-1.1,0) ++ (0,-1.8) node {to QNG  filter};

   \path[blue, draw, -stealth,shorten >= 2pt, shorten <= 2pt, thick]  
        (FF5.south) ++ (-1.1,0) -- ++ (0,-1.0)node [left] {$K^i_\bbx(t)$} -- ++ (0,-0.5);
   \path[blue, draw, -stealth,shorten >= 2pt, shorten <= 2pt, thick]  
        (FF5.south) ++ (+1.1,0) -- ++ (0,-1.0)node [left] {$K^i_\theta(t)$} -- ++ (0,-0.5);
   \path(FF5.south) ++ (+1.1,0) ++ (0,-1.8) node {to QNG filter};
   \path(FF5.south) ++ (-1.1,0) ++ (0,-1.8) node {to QNG  filter};

\end{tikzpicture}}\vspace{-3mm}
\caption{Propagation of gains required to implement the Quadratic Network Game (QNG) filter of Fig. \ref{feedback_diagram}. Gains are separated into interacting LMMSE and game blocks. All agents perform a full network simulation in which they compute the gains of all other agents. This is necessary because when we compute the play coefficients $\bbv_{j,t}$ in the game block, agent $i$ builds the matrix $L_t$ that is formed by the blocks $L_{j,t}$ of all  agents [cf. \eqref{multipliers_matrix}]. This full network simulation is possible because the network topology and private signal models are common knowledge.}\vspace{-5mm}
\label{gains_diagram}
\end{figure*}
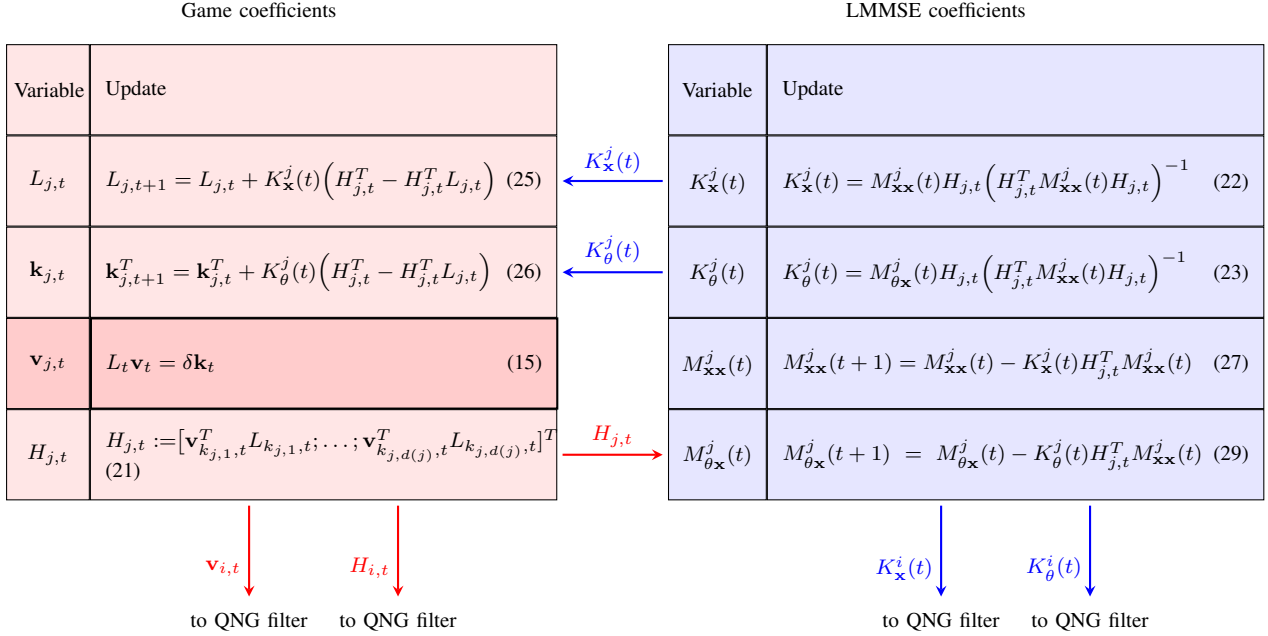


\section{Vector states and vector observations} \label{vector_state} 

Consider the case when state of the world is a vector, that is,  $\bbtheta \in \reals^m$ for $m > 1$. Similar to the scalar case, each agent receives initial private signal $\bbx_i \in \reals^m$,
\begin{align} \label{private_signal_vector}
\bbx_i = \bbtheta + \bbepsilon_i
\end{align}
where the additive noise term $\bbepsilon_i \in \reals^{m}$ is multivariate Gaussian with zero mean and variance-covariance matrix $C_{i} \in \reals^{m \times m}$. For future reference, define the vector obtained by stacking elements at the $k$th row and $l$th column of variance-covariance matrices of all agents, $\bbC_{k,l} := [C_1[k,l], \ldots, C_N[k,l]]^T$. We use $\bbx_i[n]$ to denote the $n$th private signal of agent $i$ where $n \leq m$. We assume that private signals are independent among agents, that is,  $\bbE_{i,0}[\bbepsilon_i \bbepsilon_j] = 0$ for all $i \in V$ and $j \in V \setminus \{i\}$. We define the set of all private signals as 
\begin{equation} \label{private_signals_vector}
\bbx := [\bbx_1[1], \ldots, \bbx_N[1],  \ldots, \bbx_1[m], \ldots, \bbx_N[m]]^T,
\end{equation}
where $\bbx \in \reals^{N m \times 1}$. We use $\bbx[n] := [ \bbx_1[n], \dots, \bbx_N [n]]^T$ to denote the vector of private signals of agents on the $n$th state of the world. 

At each stage $t$, agent $i$ takes action $\bba_i(t) \in \reals^m$.
Agent $i$'s action at time $t$ is to maximize a payoff function which is represented by the following quadratic function
\begin{equation} \label{quadratic_utility_form_vector}
u_i(\bba_i, \{a_{j}\}_{j \in V\setminus i} , \bbtheta)  =  - \frac{1}{2} \sum_{j \in V}  \bba_j^T \bba_j + \hspace{-3mm}\sum_{j \in V\setminus\{i\}} \hspace{-3mm} \bba_i^T B_{ij}  \bba_j  +\bba_i^T D \bbtheta,
\end{equation}
where constants $B_{ij}$ and $D$ belong to $\reals^{m \times m}$. Similar to the scalar case, other additive terms that depend on $\{\bba_{j}\}_{j \in V \setminus i}$ and $\bbtheta$ can exist without changing the results to follow. We obtain the best response function for agent $i$ by taking the derivative of the expected utility function with respect to $\bba_i$, equating it to zero, and solving for $\bba_i$:
\begin{equation}\label{linear_reply_vector}
\BR_{i,t}(\{\sigma_{j,t}(h_{j,t})\}_{j \in V \setminus i}) = \sum_{j\in V\setminus{i}}\hspace{-0.2cm} B_{ij} \bbE_{i,t}[\sigma_{j,t}(h_{j,t})] + D \bbE_{i,t}[\bbtheta].
\end{equation}
Note that $\BR_{t}: \reals^{Nm} \to \reals^{Nm}$.

Similar to the case when the unknown parameter is a scalar, it is sufficient for agents to keep track of estimates of $\bbx$ in order to achieve the best estimate of $\bbtheta$. Accordingly, the definitions of estimates of private signals and the unknown parameters and their corresponding covariance matrices \eqref{error_covariance_xx}--\eqref{error_covariance_thetax} are the same as in the scalar case. 

In what follows, we show that the mean estimates are linear in private signals and equilibrium actions are linear in expectations of private signals in the similar fashion we did for the scalar state of the world. 
\begin{lemma} \label{linear_equilibrium_at_time_t_vector}
Consider a Bayesian game with quadratic utility as in \eqref{quadratic_utility_form_vector}. Suppose that for all agents $i$, the joint posterior beliefs on the state of the world $\bbtheta$ and the private signals $\bbx$ given the local history $h_{i,t}$ at time $t$, $\bbP_{i,t}([\bbtheta^T, \bbx^T])$, are Gaussian with means expressed as
\begin{align} \label{linear_estimates_first_one_vector}
\mathbf{E}_{i,t} \left[\bbtheta\right] = Q_{i,t}  \bbx ,  \text{ and } \bbE_{i,t}[\bbx] = L_{i,t} \bbx,
\end{align}
where $L_{i,t} \in \reals^{Nm \times Nm}$ and $Q_{i,t} \in \reals^{m \times Nm}$ are known estimation weights.
If there exists an equilibrium strategy profile that is linear in expectations of private signals,
\begin{equation} \label{linear_action_vector}
 \mathbf{\sigma}_{i,t}^*(h_{i,t}) = U_{i,t} \bbE_{i,t}[\bbx]   \quad\forall i \in V,
\end{equation}
then the action coefficients $\{ U_{i,t}\}_{i\in V}$ can be obtained by solving the system of linear equations
\begin{equation} \label{linear_equilibrium_vector_thm}
L_{i,t}^T U_{i,t}^T   =  \sum_{j\in V\setminus{i}}  L_{i,t}^T L_{j,t}^T U_{j,t}^T B_{ij}^T  +  Q_{i,t}^T D^T, \quad \forall i \in V
\end{equation}
\end{lemma}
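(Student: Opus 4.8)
The plan is to follow the template of the proof of Lemma \ref{linear_equilibrium_at_time_t} verbatim, now carrying the coefficient matrices $U_{i,t}, L_{i,t}, Q_{i,t}, B_{ij}, D$ through the argument in place of the scalar-state quantities. As in that lemma, existence of a linear equilibrium of the form \eqref{linear_action_vector} is a hypothesis, so I only need to \emph{characterize} its coefficients, not construct it. First I would substitute the candidate strategy $\sigma_{i,t}^*(h_{i,t}) = U_{i,t}\bbE_{i,t}[\bbx]$ into the best-response fixed point \eqref{linear_reply_vector}, obtaining $U_{i,t}\bbE_{i,t}[\bbx] = \sum_{j\in V\setminus i} B_{ij}\,\bbE_{i,t}\big[U_{j,t}\bbE_{j,t}[\bbx]\big] + D\,\bbE_{i,t}[\bbtheta]$ for each agent $i$.

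The next step is to collapse the iterated expectation. Because $U_{j,t}$ and $L_{j,t}$ are common knowledge and, by the induction hypothesis \eqref{linear_estimates_first_one_vector}, $\bbE_{j,t}[\bbx] = L_{j,t}\bbx$, linearity of the conditional expectation gives the matrix analog of \eqref{i_j_estimate}: $\bbE_{i,t}\big[U_{j,t}\bbE_{j,t}[\bbx]\big] = U_{j,t}L_{j,t}\,\bbE_{i,t}[\bbx]$. Substituting this together with $\bbE_{i,t}[\bbx] = L_{i,t}\bbx$ and $\bbE_{i,t}[\bbtheta] = Q_{i,t}\bbx$ turns the fixed point into the identity $U_{i,t}L_{i,t}\bbx = \sum_{j\in V\setminus i} B_{ij}U_{j,t}L_{j,t}L_{i,t}\bbx + DQ_{i,t}\bbx$. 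Since \eqref{linear_reply_vector} must hold for every history $h_{i,t}$ it must hold for every realization of $\bbx\in\reals^{Nm}$, so I may equate the matrices multiplying $\bbx$ on both sides, giving $U_{i,t}L_{i,t} = \sum_{j\in V\setminus i} B_{ij}U_{j,t}L_{j,t}L_{i,t} + DQ_{i,t}$; transposing both sides yields precisely \eqref{linear_equilibrium_vector_thm}.

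The main thing to be careful about is purely the matrix bookkeeping: $B_{ij}$, $U_{j,t}$, $L_{j,t}$, $L_{i,t}$, $D$, $Q_{i,t}$ do not commute, so I must keep the left/right multiplication order straight when pulling $B_{ij}$ through the expectation and, especially, when transposing --- this is where the reversed order of the factors in $L_{i,t}^T L_{j,t}^T U_{j,t}^T B_{ij}^T$ and in $Q_{i,t}^T D^T$ originates. The one genuinely logical point, as in the scalar proof, is the passage from ``holds for all histories'' to ``holds for all $\bbx$, hence coefficient matching is legitimate''; there is no well-posedness concern for the best response itself, since the utility \eqref{quadratic_utility_form_vector} has Hessian $-I$ with respect to $\bba_i$ and \eqref{linear_reply_vector} is therefore the exact first-order characterization of $\BR_{i,t}$. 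As in the remark following Lemma \ref{linear_equilibrium_at_time_t}, these $N$ coupled linear matrix equations determine $\{U_{i,t}\}_{i\in V}$ independently of the signal realization and can, if desired, be stacked into a single linear system.
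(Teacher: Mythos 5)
Your proposal is correct and follows essentially the same route as the paper's proof: substitute the linear candidate into the best-response fixed point \eqref{linear_reply_vector}, collapse the nested expectation via $\bbE_{i,t}[\bbE_{j,t}[\bbx]] = L_{j,t}\bbE_{i,t}[\bbx]$, and match the matrix coefficients of $\bbx$ since the equilibrium equations must hold for every history. The paper leaves the final transposition implicit, but your explicit handling of the non-commutativity and factor ordering is exactly the bookkeeping that yields \eqref{linear_equilibrium_vector_thm}.
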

\begin{proof}
The proof is analogous to the proof of Lemma \ref{linear_equilibrium_at_time_t}. By substituting the candidate strategies in \eqref{linear_action_vector} to the best response function in \eqref{linear_reply_vector} for all $i \in V$, we obtain the following equilibrium equations
\begin{equation} \label{equilibrium_equations}
U_{i,t} \bbE_{i,t} [\bbx] = \sum_{j \in V \setminus \{i\}} B_{ij} \bbE_{i,t}[U_{j,t} \bbE_{j,t} [\bbx]] + D \bbE_{i,t}[\bbtheta].
\end{equation}
 for all $i \in V$. After using the fact that $\bbE_{i,t}[\bbE_{j,t}[\bbx]] = L_{j,t} \bbE_{i,t}[\bbx]$ with mean estimate assumptions in  \eqref{linear_estimates_first_one_vector} for the corresponding terms in \eqref{equilibrium_equations}, we obtain the following set of equations
\begin{equation} \label{equilibrium_equations_2}
U_{i,t} L_{i,t} \bbx = \sum_{j \in V \setminus \{i\}} B_{ij} U_{j,t} L_{j,t} L_{i,t}\bbx + D Q_{i,t} \bbx.
\end{equation}
We ensure that the strategies in \eqref{linear_action_vector} satisfy the equilibrium equations for any realization of history by equating coefficients that multiply each component of $\bbx$ in \eqref{equilibrium_equations_2} which yields the set of equations given by \eqref{linear_equilibrium_vector_thm}. 
\end{proof}

For a linear equilibrium strategy, the actions can be written as a linear combination of the private signals using \eqref{linear_estimates_first_one_vector}, that is,  the action of agent $i$ at time $t$ is given by
\begin{equation} \label{linear_action_assumption_vector}
a_{i}(t) = U_{i,t} L_{i,t} \bbx \qquad \forall i \in V. 
\end{equation}
Being able to express actions as in \eqref{linear_action_assumption_vector} permits writing observations of agents in linear form. From the perspective of an observer, the action $\bba_{j}(t)$ is equivalent to observing a linear combination of private signals. As a result, we can represent observation vector of agent $i$ $\bba_{n(i)}(t):=\big[\bba_{j_1}(t) , \ldots, \bba_{j_{d(i)}}(t)\big]^T \in \reals^{m d(i) } $ in linear form as
\begin{align} \label{observation_matrix_BNE_vector}
\bba_{n(i)}(t) = H_{i,t}^T \bbx = [U_{j_1,t} L_{j_1,t}; \ldots; U_{j_{d(i)},t} L_{j_{d(i)},t}]\bbx
\end{align} 
where $H_{i,t}^T = [U_{j_1,t} L_{j_1,t}; \ldots; U_{j_{d(i)},t} L_{j_{d(i)},t}]\in \reals^{m d(i) \times N m}$ is the observation matrix of agent $i$. 

Agent $i$'s belief of $\bbx$ at time $t$ is normal, and at time $t+1$ agent $i$ observes a linear combination of $\bbx$. Hence, agent $i$'s belief at time $t+1$ can be obtained by a sequential LMMSE update. As a result, mean estimates remain weighted sums of private signals as in \eqref{linear_estimates_first_one_vector}. In the following lemma, we explicitly present the way we compute the estimation weights, $L_{i,t+1}$ and $Q_{i,t+1}$, at time $t+1$ when $\bbtheta \in \reals^m$. 

\begin{lemma} \label{rational_belief_updates_theorem_L_vector}
Consider a Bayesian game with quadratic function as in \eqref{quadratic_utility_form_vector} and the same assumptions and definitions of Lemma \ref{linear_equilibrium_at_time_t_vector}.  Further define the gain matrices  as
\begin{IEEEeqnarray}{rCl}
K^i_\bbx(t) &:=&  M^i_{\bbx \bbx} (t) H_{i,t}\big(H_{i,t}^T  M^i_{\bbx \bbx} (t) H_{i,t}\big)^{-1}, \label{eqn_lmmse_gain_x_vector} \\
K^i_{\bbtheta} (t) &:=& M^i_{\bbtheta \bbx} (t) H_{i,t} \big(H_{i,t}^T  M^i_{\bbx \bbx} (t) H_{i,t}\big)^{-1}.\label{eqn_lmmse_gain_theta_vector}
\end{IEEEeqnarray}
If agents play according to a linear equilibrium strategy then agent $i$'s posterior $\bbP_{i, t+1}([\bbtheta^T, \bbx^T])$ is Gaussian with means that are linear combination of private signals,
\begin{align} \label{linear_estimates_second_one_vector}
\mathbf{E}_{i,t+1} \left[\bbtheta\right] = Q_{i,t+1}  \bbx ,  \text{ and } \bbE_{i,t+1}[\bbx] = L_{i,t+1} \bbx,
\end{align}
where the estimation matrices are given by
\begin{align}
L_{i,t+1} &= L_{i,t} +  K^i_\bbx(t) \left(H_{i,t}^T -   H_{i,t}^T L_{i,t}\right) \label{weights_recursion_x_vector},\\
Q_{i,t+1} &= Q_{i,t} + K^i_{\bbtheta}(t) \left( H_{i,t}^T -  H_{i,t}^T, L_{i,t}\right)\label{weights_recursion_theta_vector},
\end{align} 
and the covariance matrices are further given by
\begin{align}
M^i_{\bbx \bbx} (t+1) =& M^i_{\bbx \bbx} (t) - K^i_\bbx(t) H_{i,t}^T M^i_{\bbx \bbx} (t),\label{LMMSE_updates_covariance_vector} \\
M^i_{\bbtheta \bbtheta} (t+1) =& M^i_{\bbtheta \bbtheta} (t) - \left[K_{\bbtheta}^i(t)^T H_{i,t}^T M^i_{ \bbx \bbtheta} (t)\right]^T , \label{theta_covariance_estimate_vector} \\
M^i_{\bbtheta \bbx} (t+1) =& M^i_{\theta \bbx} (t) -  K^i_{\bbtheta}(t) H_{i,t}^T M^i_{\bbx \bbx} (t).\label{theta_x_covariance_estimate_vector}
\end{align}
\end{lemma}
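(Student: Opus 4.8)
The plan is to follow the proof of Lemma~\ref{rational_belief_updates_theorem_L} essentially line for line, promoting the scalar $\theta$ and vector $\bbk_{i,t}$ to the vector $\bbtheta$ and matrix $Q_{i,t}$ and carrying along the (now larger) dimensions. First I would establish that Gaussianity is preserved: by the induction hypothesis of Lemma~\ref{linear_equilibrium_at_time_t_vector} the prior $\bbP_{i,t}([\bbtheta^T,\bbx^T])$ is jointly Gaussian, and since agents play the linear equilibrium strategy \eqref{linear_action_vector}, the neighboring actions observed between $t$ and $t+1$ satisfy $\bba_{n(i)}(t)=H_{i,t}^T\bbx$ with $H_{i,t}$ as in \eqref{observation_matrix_BNE_vector} --- a deterministic matrix formed from the common-knowledge coefficients $\{U_{j,t},L_{j,t}\}_{j\in n(i)}$. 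Consequently, from agent $i$'s viewpoint $(\bbtheta,\bbx,\bba_{n(i)}(t))$ is jointly Gaussian, and conditioning on $\bba_{n(i)}(t)$ leaves $(\bbtheta,\bbx)$ jointly Gaussian; this yields the claimed Gaussianity of $\bbP_{i,t+1}([\bbtheta^T,\bbx^T])$.

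Next I would invoke the sequential-LMMSE / Gaussian-conditioning formulas \cite[Ch.~12]{Kay}: the innovation covariance is $H_{i,t}^T M^i_{\bbx\bbx}(t)H_{i,t}$, the gains are $K^i_\bbx(t)$ and $K^i_{\bbtheta}(t)$ exactly as in \eqref{eqn_lmmse_gain_x_vector}--\eqref{eqn_lmmse_gain_theta_vector}, and the posterior means obey
\begin{align*}
\bbE_{i,t+1}[\bbx] &= \bbE_{i,t}[\bbx] + K^i_\bbx(t)\big(\bba_{n(i)}(t) - \bbE_{i,t}[\bba_{n(i)}(t)]\big),\\
\bbE_{i,t+1}[\bbtheta] &= \bbE_{i,t}[\bbtheta] + K^i_{\bbtheta}(t)\big(\bba_{n(i)}(t) - \bbE_{i,t}[\bba_{n(i)}(t)]\big).
\end{align*}
I would then substitute the linear representations from the induction hypothesis, namely $\bba_{n(i)}(t)=H_{i,t}^T\bbx$, $\bbE_{i,t}[\bba_{n(i)}(t)]=H_{i,t}^T\bbE_{i,t}[\bbx]=H_{i,t}^TL_{i,t}\bbx$, $\bbE_{i,t}[\bbx]=L_{i,t}\bbx$ and $\bbE_{i,t}[\bbtheta]=Q_{i,t}\bbx$ from \eqref{linear_estimates_first_one_vector}, turning the two updates into
\begin{align*}
\bbE_{i,t+1}[\bbx] &= \big(L_{i,t} + K^i_\bbx(t)(H_{i,t}^T - H_{i,t}^TL_{i,t})\big)\bbx,\\
\bbE_{i,t+1}[\bbtheta] &= \big(Q_{i,t} + K^i_{\bbtheta}(t)(H_{i,t}^T - H_{i,t}^TL_{i,t})\big)\bbx.
\end{align*}
Since these hold for every realization of $\bbx$, reading off the coefficient matrices gives \eqref{linear_estimates_second_one_vector} with $L_{i,t+1}$ and $Q_{i,t+1}$ as in \eqref{weights_recursion_x_vector}--\eqref{weights_recursion_theta_vector}, and the covariance recursions \eqref{LMMSE_updates_covariance_vector}--\eqref{theta_x_covariance_estimate_vector} are the standard sequential-LMMSE covariance updates with measurement matrix $H_{i,t}^T$.

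I do not expect a genuine conceptual obstacle: the statement is a dimensional-bookkeeping generalization of Lemma~\ref{rational_belief_updates_theorem_L}, and its crux --- that linearity of strategies renders the neighbors' actions a linear (hence Gaussianity-preserving) observation of $\bbx$ --- is already furnished by \eqref{linear_action_assumption_vector}--\eqref{observation_matrix_BNE_vector}. The points that require care are: (i) keeping transposes and block dimensions consistent throughout, with $H_{i,t}\in\reals^{Nm\times md(i)}$, $L_{i,t}\in\reals^{Nm\times Nm}$, $Q_{i,t}\in\reals^{m\times Nm}$, $K^i_\bbx(t)\in\reals^{Nm\times md(i)}$ and $K^i_{\bbtheta}(t)\in\reals^{m\times md(i)}$; (ii) noting that the equilibrium coefficients $U_{j,t}$ that enter $H_{i,t}$ via \eqref{observation_matrix_BNE_vector} are common knowledge (obtained from Lemma~\ref{linear_equilibrium_at_time_t_vector}), so that agent $i$ can actually form $\bbE_{i,t}[\bba_{n(i)}(t)]$ and the update is well posed; and (iii) correcting the stray comma in \eqref{weights_recursion_theta_vector}, which should read $H_{i,t}^T L_{i,t}$ to agree with the derivation above.
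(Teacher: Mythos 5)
Your proposal is correct and follows exactly the route the paper takes: the paper's own proof of this lemma is a one-line reference stating it is identical to the proof of Lemma~\ref{rational_belief_updates_theorem_L} with $U_{i,t}$ in place of $\bbv_{i,t}$, and your argument is precisely that scalar proof carried out with the promoted dimensions. Your observation that the comma in \eqref{weights_recursion_theta_vector} is a typo for $H_{i,t}^T L_{i,t}$ is also right.
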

\begin{proof}
The proof is identical to the proof of Lemma \ref{rational_belief_updates_theorem_L} with the action coefficients $U_{i,t}$ taking the place of $\bbv_{i,t}$.
\end{proof}

Lemma \ref{rational_belief_updates_theorem_L_vector} shows that when mean estimates are linear combinations  of private signals at time $t$, they remain that way at time $t+1$. In the next theorem, we show that assumption in \eqref{linear_estimates_first_one_vector} is indeed true for all time by realizing that the estimates at time $t =0$ are linear combinations of private signals. To simplify presentation of initial conditions, we assume that agent $i$'s private signals are independent, $\bbE_{i,0}[\bbx_i[k] \bbx_i[l]] = 0$ for all $k =1, \dots, m$ and $l \neq k$.  
 

%

\begin{theorem} \label{rational_updates_theorem_vector}
Given the quadratic utility function in \eqref{quadratic_utility_form_vector}, if there exists a linear equilibrium strategy $\sigma_t^*$ as in \eqref{linear_action_vector} for $t \in \naturals$, then the action coefficients $U_{i,t}$ can be computed by solving the system of linear equations in \eqref{linear_equilibrium_vector_thm}, and further, agents' estimates of $\bbx$ and $\bbtheta$ are linear combinations of private signals as in \eqref{linear_estimates_first_one_vector} with estimation matrices computed recursively using \eqref{eqn_lmmse_gain_x_vector}-\eqref{eqn_lmmse_gain_theta_vector} and \eqref{weights_recursion_x_vector}-\eqref{theta_x_covariance_estimate_vector} with initial values
\begin{equation}  \label{state_initial_estimate}
Q_{i,0}:= 
\begin{psmallmatrix}
 \bbe_i^T & \bbzero_{1 \times N}& \ldots &  \bbzero_{1 \times N} \\
  \bbzero_{1 \times N} &  \bbe_i^T & \ldots & \bbzero_{1 \times N} \\
   \vdots & \cdots & \ddots &\vdots \\
    \bbzero_{1 \times N} & \ldots &  \bbzero_{1 \times N} &  \bbe_i^T 
 \end{psmallmatrix} 
 \in \reals^{m \times Nm},
 \end{equation}
\begin{equation} \label{signal_initial_estimate}
L_{i,0}:= 
{\rm diag}\left(\left[\bbone \bbe_i^T, \ldots,\bbone \bbe_i^T \right]\right)
  \in \reals^{Nm \times Nm},
 \end{equation}
where $\bbe_i \in \reals^{N}$. The initial covariance matrix $M^i_{\bbx \bbx}(0) \in \reals^{N m \times N m}$ is a  diagonal block matrix with $N \times N$ blocks $((M^i_{\bbx \bbx}))_{k,k} \in \reals^{N \times N}$ for $k = 1, \dots, m$ , initial variance $M^i_{\bbtheta \bbtheta}(0)  \in \reals^{m \times m}$ and initial cross covariance $M^i_{\bbtheta \bbx}(0) \in \reals^{m \times N m}$ are given by
\begin{align}
\left((M^i_{\bbx \bbx})\right)_{k,k} &= {\rm diag}(\bar \bbe_i) {\rm diag}(\bbC_{k,k}) + \bar\bbe_i \bar\bbe_i^T C_i[k,k],  \label{variance_vector}\\
M^i_{\bbtheta \bbtheta}(0) &= C_{i}, \label{state_variance} \\
 M^i_{\bbtheta \bbx}(0) &=  C_i 
 \begin{psmallmatrix} 
 \bar\bbe_i^T & \bbzero_{1 \times N}& \ldots &  \bbzero_{1 \times N} \\
  \bbzero_{1 \times N} &  \bar\bbe_i^T & \ldots & \bbzero_{1 \times N} \\
   \vdots & \cdots & \ddots &\vdots \\
    \bbzero_{1 \times N} & \ldots &  \bbzero_{1 \times N} &  \bar\bbe_i^T 
 \end{psmallmatrix} \label{state_private_variance_matrix}
\end{align}
%
\end{theorem}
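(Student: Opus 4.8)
The plan is to prove Theorem \ref{rational_updates_theorem_vector} by induction on the stage index $t$, paralleling the scalar argument behind Theorem \ref{rational_updates_theorem} (Appendix \ref{Theorem_1_proof}). The inductive claim is that, for every $t$, the joint posterior $\bbP_{i,t}([\bbtheta^T,\bbx^T])$ is Gaussian with means of the linear form \eqref{linear_estimates_first_one_vector} for the matrices $Q_{i,t}$, $L_{i,t}$ generated by the stated recursion, and with error covariances $M^i_{\bbx\bbx}(t)$, $M^i_{\bbtheta\bbtheta}(t)$, $M^i_{\bbtheta\bbx}(t)$ as propagated by \eqref{LMMSE_updates_covariance_vector}--\eqref{theta_x_covariance_estimate_vector}. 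Once this claim holds at every $t$, Lemma \ref{linear_equilibrium_at_time_t_vector} applies at each stage and gives that the linear equilibrium coefficients $U_{i,t}$ solve \eqref{linear_equilibrium_vector_thm}, which is the one remaining assertion of the theorem.

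For the base case $t=0$ I would start from the signal model \eqref{private_signal_vector}, the cross-agent independence of the noises, the within-agent independence $\bbE_{i,0}[\bbx_i[k]\bbx_i[l]]=0$ for $k\neq l$, and a non-informative (flat) prior on $\bbtheta$. Conditioning on $h_{i,0}=\{\bbx_i\}$ then gives $\bbE_{i,0}[\bbtheta]=\bbx_i$ with residual $\bbtheta-\bbx_i=-\bbepsilon_i$; since $\bbx_j=\bbtheta+\bbepsilon_j$, this forces $\bbE_{i,0}[\bbx_j[n]]=\bbx_i[n]$ for every agent $j$ and component $n$, with residuals $\bbx_j[n]-\bbx_i[n]=\bbepsilon_j[n]-\bbepsilon_i[n]$ (vanishing when $j=i$). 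Writing these identities in the stacking order of \eqref{private_signals_vector} reproduces $\bbE_{i,0}[\bbtheta]=Q_{i,0}\bbx$ and $\bbE_{i,0}[\bbx]=L_{i,0}\bbx$ with $Q_{i,0}$, $L_{i,0}$ exactly as in \eqref{state_initial_estimate}--\eqref{signal_initial_estimate}. Taking second moments of the residuals $-\bbepsilon_i[k]$ and $\bbepsilon_j[n]-\bbepsilon_i[n]$ under the stated independence assumptions then yields the block-diagonal expressions \eqref{variance_vector}, \eqref{state_variance}, and \eqref{state_private_variance_matrix}; joint Gaussianity at $t=0$ is inherited from the Gaussianity of $\bbepsilon_i$ (and the flat-prior limit).

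For the inductive step, assume the claim at time $t$. The hypotheses of Lemma \ref{linear_equilibrium_at_time_t_vector} are then met, so the (assumed-to-exist) linear equilibrium strategy has coefficients $U_{i,t}$ satisfying \eqref{linear_equilibrium_vector_thm}; by \eqref{linear_action_assumption_vector} the induced actions and hence the neighboring observations \eqref{observation_matrix_BNE_vector} are linear in $\bbx$. Plugging this into Lemma \ref{rational_belief_updates_theorem_L_vector} shows $\bbP_{i,t+1}([\bbtheta^T,\bbx^T])$ is again Gaussian with means \eqref{linear_estimates_second_one_vector} governed by \eqref{weights_recursion_x_vector}--\eqref{weights_recursion_theta_vector} and covariances by \eqref{LMMSE_updates_covariance_vector}--\eqref{theta_x_covariance_estimate_vector} (the gains being as in \eqref{eqn_lmmse_gain_x_vector}--\eqref{eqn_lmmse_gain_theta_vector}). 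This is precisely the inductive claim at $t+1$, closing the loop, and together with the base case it proves the theorem.

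I expect the only genuinely laborious step to be the base-case bookkeeping: translating the per-component scalar identities into the block/Kronecker layout forced by the ordering in \eqref{private_signals_vector}, and justifying the improper flat prior on $\bbtheta$ so that $\bbE_{i,0}[\bbtheta]=\bbx_i$ is legitimate (e.g. as the vanishing-precision limit of proper Gaussian priors, or by replacing the Bayesian estimate at $t=0$ with the BLUE). Everything after $t=0$ is a verbatim transcription of the scalar induction with $U_{i,t}$ in the role of $\bbv_{i,t}^T$, since Lemmas \ref{linear_equilibrium_at_time_t_vector} and \ref{rational_belief_updates_theorem_L_vector} already absorb the per-stage algebra.
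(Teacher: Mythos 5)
Your proposal is correct and follows essentially the same route as the paper's Appendix B proof: verify the $t=0$ initializations by direct computation from the signal model and the independence assumptions, then close the induction by invoking Lemmas \ref{linear_equilibrium_at_time_t_vector} and \ref{rational_belief_updates_theorem_L_vector}. The only point where you go beyond the paper is in flagging the need to justify the improper flat prior on $\bbtheta$ behind $\bbE_{i,0}[\bbtheta]=\bbx_i$ — a legitimate rigor concern that the paper's proof silently glosses over — but this does not change the argument's structure.
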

%
\begin{proof} See Appendix \ref{Theorem_2_proof}.
\end{proof}

Similar to the scalar case, when network structure and the equilibrium strategy profile are common knowledge, agent $i$ can calculate the weights $\{U_{j,t}\}_{j \in V}$ for all $t$ and update his estimates locally. In Algorithm \ref{alg2}, we provide a sequential local algorithm for agent $i$ to calculate updates for $\bbtheta$ and $\bbx$ and to act according to equilibrium strategy. The Bayesian rational learning defined here in Algorithm \ref{alg2} for the vector state case follows the same steps for the scalar case defined in Section \ref{QNG_filter} and by Figs. \ref{feedback_diagram} and \ref{gains_diagram}.

\begin{algorithm}[t]
\caption{QNG filter for $\bbtheta \in \reals^d$}          
\label{alg2}                           
\emph{Initialization:} Set posterior distribution on $\bbtheta$ and $\bbx$
\begin{equation}
 \begin{bmatrix} \bbtheta \\ \bbx \end{bmatrix}\given h_{i,0} \sim \ccalN\left(\begin{bmatrix} Q_{i,0} \bbx\\ L_{i,0} \bbx \end{bmatrix}, 
\begin{pmatrix}
M_{\bbtheta \bbtheta}^i(0) ,  M_{\bbtheta \bbx}^i(0) \\
M_{\bbx \theta}^i(0) , M_{\bbx \bbx}^i(0)
\end{pmatrix}
 \right) \nonumber
\end{equation}
and 
$\{L_{j,0}, \bbk_{j,0}\}_{j\in V}$ according to \eqref{state_initial_estimate} and  \eqref{signal_initial_estimate}.   \vspace{4mm}\\
\vspace{2mm}
{\bf For} $t = 0, 1, 2,\ldots$  \vspace{-2mm}\\
\begin{enumerate}
\item \emph{Equilibrium strategy:}  Solve for $\{U_{j,t}\}_{j \in V}$ using the set of equations in \eqref{linear_equilibrium_vector_thm}. \\ \vspace{-2mm}
\item \emph{ Play and observe:} Take action $a_{i}(t) = U_{i,t} \bbE_{i,t}[\bbx]$ and observe $a_{n(i)}(t)$. \\ \vspace{-2mm}
\item \emph{Observation matrix:} Construct $H_{i,t}$ using \eqref{observation_matrix_BNE_vector}. \\ \vspace{-2mm}
\item \emph{Bayesian estimates:}  Update $\bbE_{i,t}[\bbx]$ and $\bbE_{i,t}[\theta]$ using \eqref{LMMSE_updates_mean} and \eqref{theta_mean_estimate}, respectively. Update error covariance matrices using \eqref{LMMSE_updates_covariance_vector}--\eqref{theta_x_covariance_estimate_vector}.\\ \vspace{-2mm}
\item \emph{Estimation weights:} Update $\{L_{j,t}, \bbk_{j,t}\}_{j\in V}$ using \eqref{weights_recursion_x_vector}--\eqref{weights_recursion_theta_vector}.
\end{enumerate}
\end{algorithm}
\vspace{-3mm}

\section{Cournot Competition}\label{cournot}

In a Cournot competition model $N$ firms produce a common good that they sell in a market with limitless demand. The cost per production unit $c$ is common for all firms and constant for all times. The selling unit price, however, decreases as the total amount of goods produced by all companies increases. We adopt the specific linear model $p-\sum_{j \in V} a_j$ for the selling unit price, where $p$ is the constant market price when no goods are produced. The profit of firm $i$ for production level $a_i \in \reals^+$ is therefore given by the utility
\begin{equation}\label{cournot_profit}
   u_i(a_i, \{a_{j}\}_{j \in V\setminus i}, \theta) = - c a_i + (p -  a_i - \sum_{j \in V \setminus i} a_j)a_i.
\end{equation}
The utility function in \eqref{cournot_profit} is not of the quadratic form given in \eqref{quadratic_utility_form} because there are two information externalities, the cost $c$ and the clearing price $p$. While it is possible to resort to the vector form of the QNG filter covered in Section \ref{vector_state}, it is simpler to write \eqref{cournot_profit} in a form compatible with \eqref{quadratic_utility_form} by defining the parameter $\theta:= p-c$ as the effective unit profit at the market price. Using this definition in \eqref{cournot_profit} and reordering terms yields
\begin{equation} \label{cournot_profit_structured}
u_i(a_i, \{a_{j}\}_{j \in V\setminus i}, \theta) =  (\theta -  a_i - \sum_{j \in V \setminus i} a_j)a_i.
\end{equation}
Since this utility function is of the form in \eqref{quadratic_utility_form}, we can use the QNG filter of Section \ref{QNG_filter} as summarized in Figs. \ref{feedback_diagram} and \ref{gains_diagram} to determine subsequent BNE production levels. The explicit form of the equilibrium equation in \eqref{linear_reply} is
\begin{equation}\label{BR_cournot}
 \sigma^*_{i,t}(h_{i,t}) = \frac{1}{2}\bbE_{i,t}[\theta] -  \frac{1}{2}\sum_{j \in V \setminus i} \bbE_{i,t}[\sigma^*_{j,t}(h_{j,t})].
\end{equation}
It is immediate from \eqref{BR_cournot} that when $\bbE_{i,t}[\theta] < 0$ it is best for firm $i$ to shut down production. To avoid boundary conditions we restrict attention to cases where private signals $\bbx$ are such that $ \bbE_{i,t}[\theta] > 0$ for all $i \in V$ and $t \in\naturals$. This can be guaranteed if all private signals are nonnegative, i.e., $\bbx\geq\bbzero$.
In a game with complete information all private signals $\bbx$ are known to all agents. In this case the (regular) Nash equilibrium actions of all agents coincide and are given by
\begin{equation} \label{Cournot_Nash}
   a^*_i = \frac{\bbE[\theta \given \bbx]}{N+1} \quad \forall i \in V.
\end{equation}
The numerical simulations in the next section show that the BNE strategies in \eqref{BR_cournot} converge to the (regular) Nash equilibrium strategy \eqref{Cournot_Nash} in a finite number of steps.

\begin{figure}
\centering
\tikzstyle{agent} = [circle, draw=black, inner sep=0pt, minimum size=0.35cm, fill=blue!30]
\tikzstyle{arrow} = [stealth-stealth, thin]
{\tiny\begin{tikzpicture}[scale=0.8]
   \path    (0,0) node (1) [agent]  {1} ++(1,0) node (2) [agent]  {2}
          ++(1,0) node (3) [agent]  {3} ++(1,0) node (4) [agent]  {4}
          ++(1,0) node (5) [agent]  {5};
   \draw [arrow] (1) to (2);   \draw [arrow] (2) to (3);
   \draw [arrow] (3) to (4);   \draw [arrow] (4) to (5);   
   \path    ( 6, 0) node (5) [agent]  {5} +( 1, 1) node (1) [agent]  {1}
           +(-1,   1) node (2) [agent]  {2} +(-1,-1) node (3) [agent]  {3}
           +( 1,  -1) node (4) [agent]  {4};
   \draw [arrow] (5) to (1);   \draw [arrow] (5) to (2);
   \draw [arrow] (5) to (3);   \draw [arrow] (5) to (4);   
   \path    (9.5, 0) + (  0:1.5) node (1)  [agent]  {1}
                    + ( 36:1.5) node (2)  [agent]  {2}
                    + ( 73:1.5) node (3)  [agent]  {3}
                    + (108:1.5) node (4)  [agent]  {4}
                    + (144:1.5) node (5)  [agent]  {5}
                    + (180:1.5) node (6)  [agent]  {6}
                    + (216:1.5) node (7)  [agent]  {7}
                    + (252:1.5) node (8)  [agent]  {8}
                    + (288:1.5) node (9)  [agent]  {9}
                    + (324:1.5) node (10) [agent]  {10};              
   \draw [arrow, bend right=20] (1) to (2);   \draw [arrow, bend right=20] (2)  to (3);
   \draw [arrow, bend right=20] (3) to (4);   \draw [arrow, bend right=20] (4)  to (5);   
   \draw [arrow, bend right=20] (5) to (6);   \draw [arrow, bend right=20] (6)  to (7);
   \draw [arrow, bend right=20] (7) to (8);   \draw [arrow, bend right=20] (8)  to (9);
   \draw [arrow, bend right=20] (9) to (10);  \draw [arrow, bend right=20] (10) to (1);      
\end{tikzpicture}} \vspace{-1mm}
\caption{Line, star and ring networks.}
\label{line_star_ring}
\end{figure}
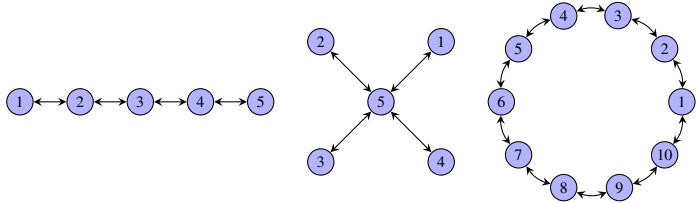

\begin{figure}[!t]
\centering
\begin{tabular}{ccc}\hspace{-5mm}
\includegraphics[width=0.32\linewidth]{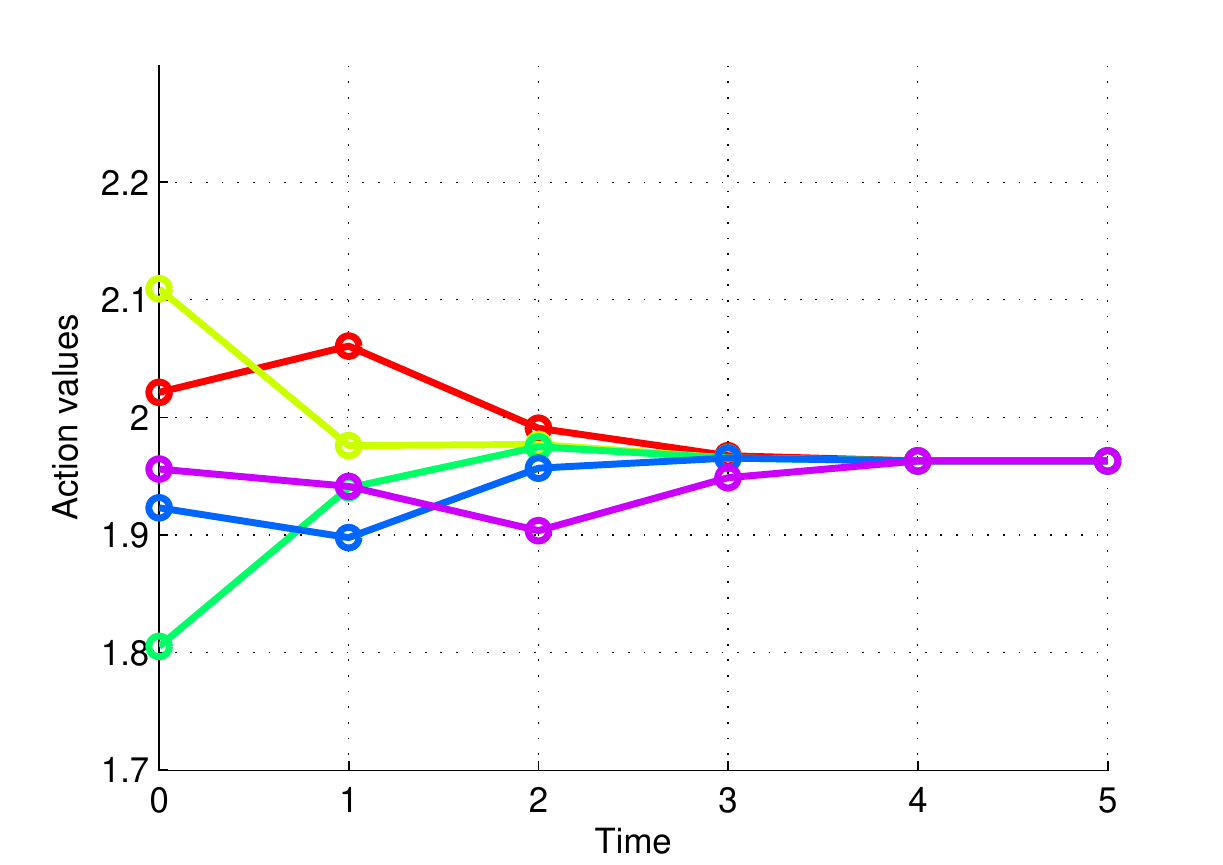} 
&\includegraphics[width=0.32\linewidth]{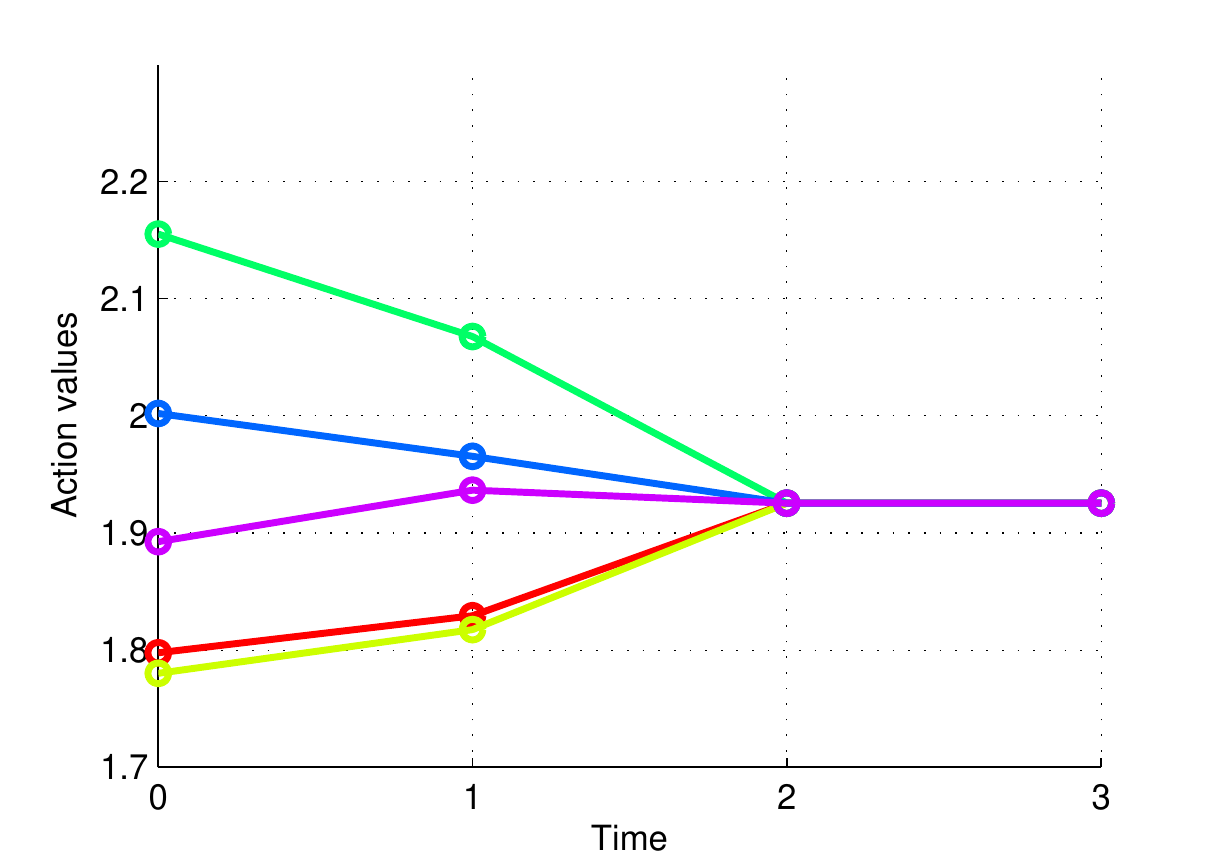}
&\includegraphics[width=0.32\linewidth]{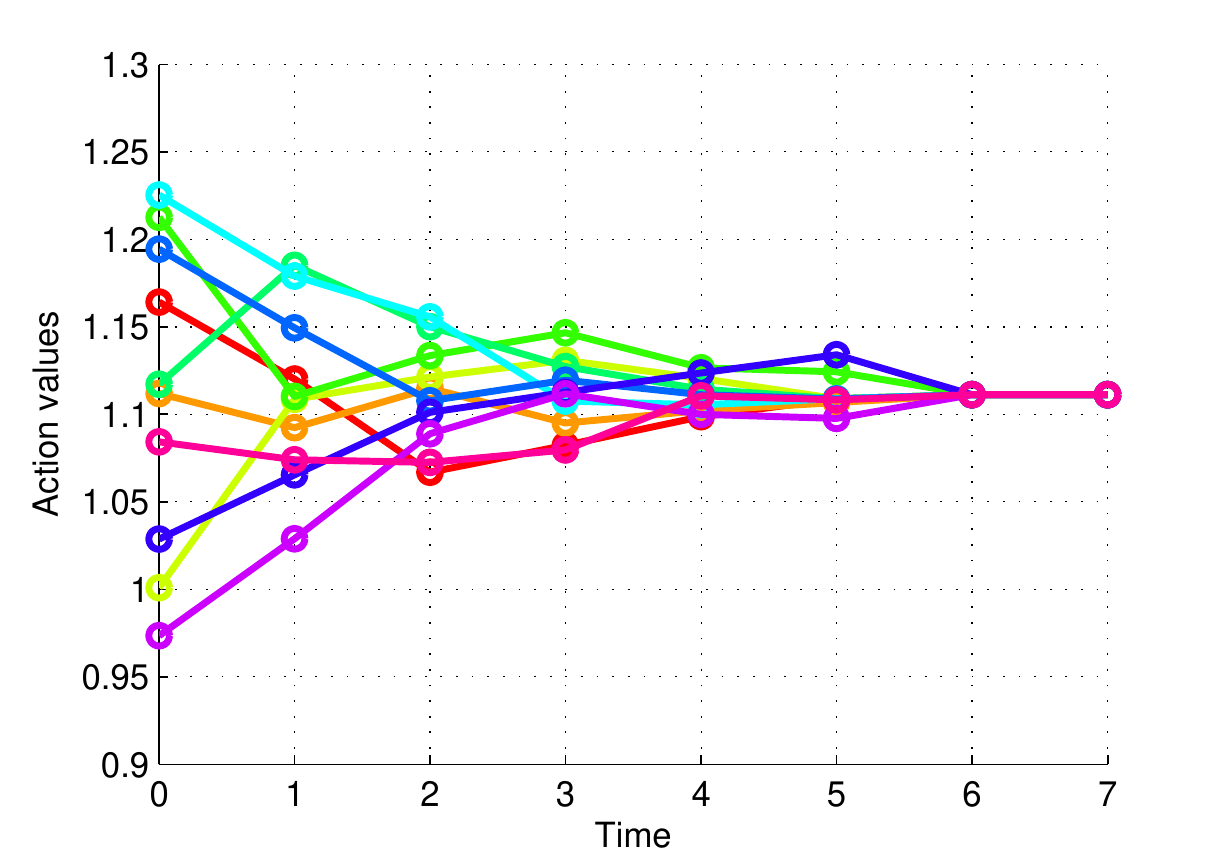} \\
\vspace{-2mm}
    \fontsize{7}{12}\selectfont (a)
      	       & \fontsize{7}{12}\selectfont (b)
 &     \fontsize{7}{12}\selectfont (c)
 \end{tabular}\vspace{-1mm}
\caption{Agents' actions over time for the Cournot competition game and networks shown in Fig. \ref{line_star_ring}. Each line indicates the quantity produced for an individual at each stage. Actions converge to the Nash equilibrium action of the complete information game in the number of steps equal to the diameter of the network.} 
\label{network_results}
\end{figure}

\begin{figure}[!t]
\centering
\begin{tabular}{ccc} \hspace{-5mm}
\includegraphics[width=0.32\linewidth]{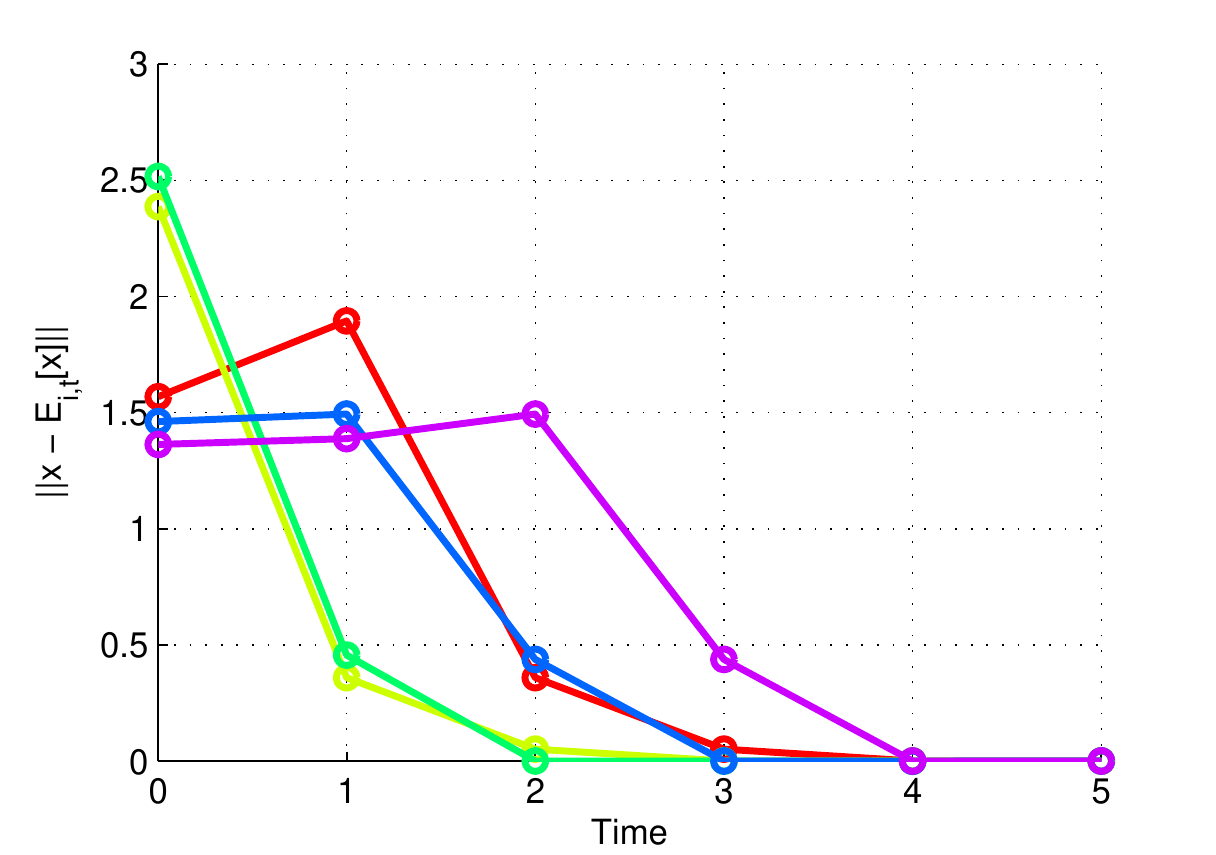}
&\includegraphics[width=0.32\linewidth]{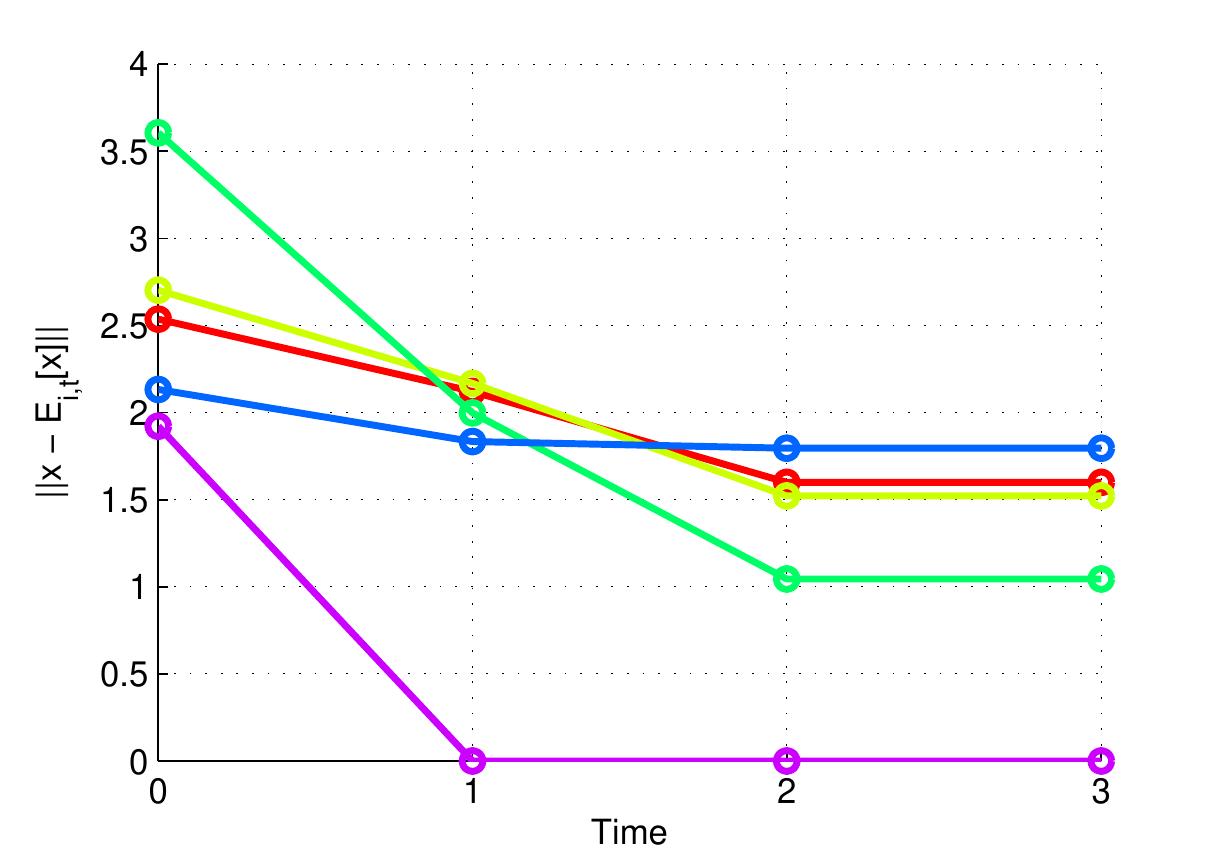}
&\includegraphics[width=0.32\linewidth]{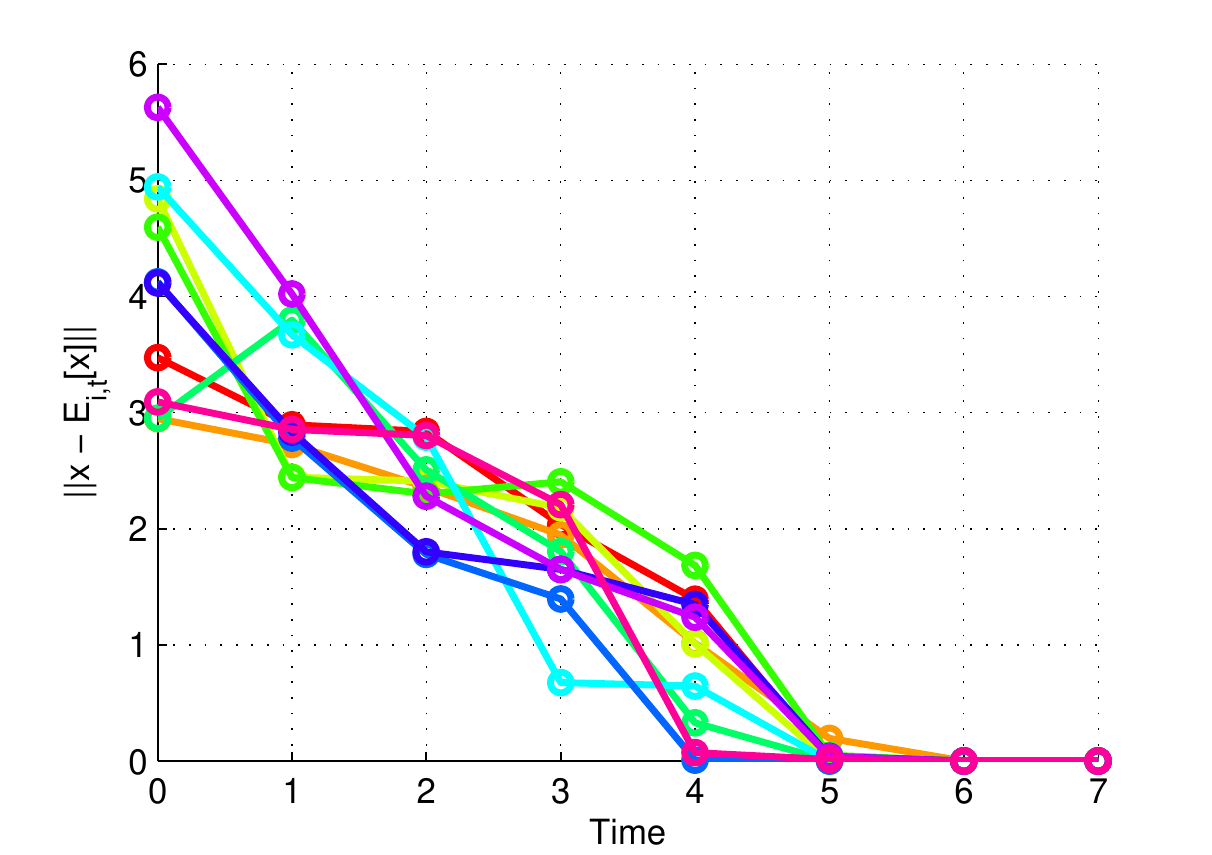} \\
\vspace{-2mm}
    \fontsize{7}{12}\selectfont (a)
      	       & \fontsize{7}{12}\selectfont (b)
 &     \fontsize{7}{12}\selectfont (c) \vspace{-1mm}
 \end{tabular}
\caption{Normed error in estimates of privates signals, $\|\bbx - \bbE_{i,t}[\bbx]\|_2^2$, for the Cournot competition game and networks shown in Fig. \ref{line_star_ring}. Each line corresponds to an agent's normed error in mean estimates of private signals over the time horizon. While all of the agents learn the true values of all the private signals in line and ring networks, in the star network only the central agent learns all of the private signals. }
\vspace{-3mm}
\label{network_beliefs}
\end{figure}

\subsection{Learning in Cournot competition} \label{Cournot_num_example}

The underlying effective unit profit is chosen as $\theta = 12\$/$unit. Firms observe private signals with the additive noise term coming from standard normal distribution, i.e., $\epsilon_i \sim \ccalN(0,1)$. Given this setting, we consider three benchmark networks: a line network with $N=5$ firms,  a star network with $N=5$ firms, and a ring  network with $N = 10$ firms (see Fig. \ref{line_star_ring}). 

The quantities produced by firms over time are shown in Fig. \ref{network_results} for the line (a), star (b) and ring (c) networks. In all of the cases, we observe consensus in the units produced. Furthermore, the consensus production $a^*$ is optimal; that is, firms converge to the Bayes-Nash equilibrium under complete information \eqref{Cournot_Nash}. This implies that all of the firms learn the best estimate of $\theta$ by the convergence time $T$, that is, $ \bbE_{i,T}[\theta \given h_{i,T}] =  \bbE[\theta \given \bbx]$ for all $i \in V$. 

Figs. \ref{network_beliefs}(a)--(c) show the error in estimation of private signals $\|\bbx - \bbE_{i,t}[\bbx]\|^2_2$ for all $i \in V$ and $t \in \naturals$. In Figs. \ref{network_beliefs}(a) and \ref{network_beliefs}(c), corresponding to line and ring networks, the mean square error in private signal estimates goes to zero for all of the firms at the end of the convergence time $T$. On the other hand, in the star network in Fig. \ref{network_beliefs}(b), except for the center firm $5$, none of the other firms has zero mean square error in private signal estimates. This means that these firms do not learn at least one of the private signals. As we know from Fig. \ref{network_results} (b), all of the firms in the star network learn the best estimate of $\theta$ given all of the private signals. Hence, in the star network, firms only learn the sufficient statistic to estimate $\theta$ (which is the average of the private signals) rather than learning each of the private signals individually. 

Figs. \ref{network_results}(a)--(c) suggest that convergence is achieved in $O(\Delta)$ steps where $\Delta$ is the diameter of the graph. In \cite{mossel2010efficient}, it is argued that for the distributed estimation problems when the individual utility function is equal to $u_i(\bba_i, \theta) = -(\bba_i - \theta)^2$, convergence happens in $O(\Delta)$ steps for tree networks. Our results show that the convergence rate is $O(\Delta)$ not only for tree networks such as line and star networks but also for the ring network when the utility function is quadratic and includes actions of others. 

\begin{figure}
\center
\tikzstyle{ellip} =[ellipse, draw=black, dashed, minimum width = 1.5cm, minimum height = 0.9cm,rotate=22]
\tikzstyle{agent} = [circle, draw=black, inner sep=0pt, minimum size=0.4cm, fill=blue!30]
\tikzstyle{arrow} = [stealth-stealth, thin]
{\tiny\begin{tikzpicture}[scale=1]
\path    ( 0,    0)                           node (1) [agent]  {1}
      ++ (-1.8,   -0.6)    + (0.1*rand, 0.1*rand) node (2) [agent]  {2}
      ++ ( 1,   -0.7)                           node (3) [agent]  {3}
      ++ ( 2.5,  1)  + (0.1*rand, 0.2*rand) node (4) [agent]  {4}
      ++ ( 0,   -0.8)    + (0.1*rand, 0.1*rand) node (5) [agent]  {5}
      ++( -1.7, 1.1)     node (6) [ellip]  {};

\draw [arrow, black!20] (1)  to (2);
\draw [arrow, black!20] (1)  to (4);
\draw [arrow, black!20] (2)  to (3);
\draw [arrow, black!20] (3)  to (5);
\draw [arrow, black!20] (4)  to (5);

\draw [-stealth, red, thick] (1) -- ++(0.7,+0.4);
\draw [-stealth, red, thick] (2) -- ++(1,+0.3*rand);
\draw [-stealth, thick] (2) -- ++(0.8,+0.4) ;
\draw [-stealth, red, thick] (3) -- ++(1,+0.3*rand);
\draw [-stealth, thick] (3) -- ++(0.5,+0.8) ;
\draw [-stealth, red, thick] (4) -- ++(1,+0.3*rand);
\draw [-stealth, thick] (4) -- ++(0.4,+0.8) ;
\draw [-stealth, red, thick] (5) -- ++(1,+0.3*rand);
\draw [-stealth, thick] (5) -- ++(0.6,+0.8) ;
\draw [-stealth, dashed] (1) -- ++(-0.6, -0.4);
\draw [-stealth, dashed] (1) -- ++(-0.4, 0.4);
\draw [-stealth, dashed, thin] (1) -- ++(0,+0.9);
\draw [-stealth, dashed, thin] (1) -- ++(0,-0.9);
\draw [-stealth, thick] (1) -- ++(0.6,+1) ;

\node [left] at (0,0.9) {$z$};
\node [left] at (-0.4,0.5) {$x$};
\node [left] at (-0.6,-0.4) {$y$};

\end{tikzpicture}}\vspace{-1mm}
\caption{Mobile agents in a 3-dimensional coordination game.  Agents observe initial noisy private signals on heading and take-off angles. Agents revise their estimates on true heading and take-off angles and coordinate their movement angles with each other through local observations.}\vspace{-5mm}
 \label{coordination_picture}
\end{figure}
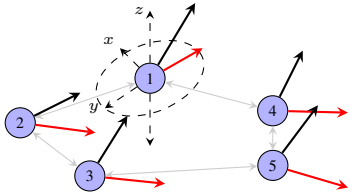

\section{Coordination Game} \label{coordination_game}
%

A network of autonomous agents want to align themselves so that they move toward a goal $(x^*,y^*,z^*)$ on 3-dimensional space following a straight path, and at the same time maintain their initial starting formation. When the goal $(x^*,y^*,z^*)$ is far away, then there exists a common correct direction of movement  toward the goal characterized by the heading angle on the $x-y$ plane $\phi \in [0^\circ, 180^\circ]$ and the take-off angle on the $x-z$ plane $\psi \in [0^\circ, 180^\circ]$. Hence, the target movement direction is given by $\bbtheta = [\phi, \psi]^T$. Fig. \ref{coordination_picture} illustrates a set of autonomous agents on a 3-dimensional plane and their heading and take-off angles where the $x$, $y$, $z$ axes are depicted for agent $1$. 

Mobile agents have the goal of maintaining the starting formation while moving at equal speed by coordinating their movement direction with other agents. Agents need to coordinate with the entire population while communication is restricted to neighboring agents whose direction of movement they can observe.
In this context, agent $i$'s decision $\bba_i \in [0, 180^\circ] \times [0, 180^\circ]$ represents the heading and take-off angles in the direction of movement. The estimation and coordination goals of agent $i$ can be represented with the following payoff
\begin{align}\label{Beauty_Contest_cost}
u_i(\bba_i, \{\bba_{j}\}_{j \in V\setminus i}, \bbtheta) &  =  - \frac{1-\lambda}{2}(\bba_i - \bbtheta)^T (\bba_i - \bbtheta)
-\:\frac{\lambda}{2(N-1)}\hspace{-0.2cm} \sum_{ j \in V \setminus \{i\}}\hspace{-0.3cm}(\bba_i - \bba_j)^T(\bba_i - \bba_j). 
\end{align}
The first term is the estimation error in the true heading and take-off angles. The second term is the coordination component that measures the discrepancy between the direction of movement and those of other agents. 
$\lambda$ is a constant in $(0,1)$ gauging the importance of estimation term with respect to the coordination term. 

The same payoff formulation can be motivated by looking at learning in organizations \cite{ArmengolMarti}. In an organization, individuals share a set of common tasks and have the incentive to coordinate with other units. Each individual receives a private piece of information about the task that needs to be performed while only being able to share his information with whom he has a direct contact in the organization.

Note that the utility function is of the quadratic form given in \eqref{quadratic_utility_form_vector} with vector states and vector actions. Hence, we can use the QNG filter in Section \ref{vector_state} as summarized in Algorithm \ref{alg2}. 
As postulated in \eqref{linear_reply}, the explicit equilibrium equation for all $i\in V$ is
\begin{align}\label{BR_beauty}
\sigma^*_{i,t}(h_{i,t})  = (1-\lambda)\bbE_{i,t}[ \bbtheta] +\frac{\lambda}{N-1} \hspace{-2mm}\sum_{j \in V \setminus \{i\}} \hspace{-2mm} \bbE_{i,t}[ \sigma^*_{j,t}(h_{j,t}))].
\end{align}

In a game with complete information, the Bayes-Nash equilibrium actions of all agents coincide and are given by
\begin{equation} \label{Coordination_Nash}
   a^*_i =\bbE[\bbtheta \given \bbx].
\end{equation}
%

%
%

In the next section, we show that the equilibrium actions in \eqref{BR_beauty} converge to the Bayes-Nash equilibrium with complete information as given by \eqref{Coordination_Nash} in finite number of steps. 

\subsection{Learning in coordination games}

The correct direction vector is chosen to be $\theta = [10^\circ, 20^\circ]^T$. We let $\lambda = 0.5$. The noise terms, $\bbepsilon_i$ are jointly Gaussian with mean zero and covariance matrix equal to the identity matrix. Having an identity covariance matrix implies that $\bbE[\bbx_i[1]\bbx_i[2]] = 0$. 

We evaluate equilibrium behavior in geometric and random networks with $N= 50$ agents, Figs. \ref{large_networks} (a) and (b), respectively. Geometric random network is created by placing the agents randomly on a $4$ meter  $\times$ 4 meter square and connecting pairs with distance less than $1$ meter between them. In the random network, any pair of agents are neighbors with probability 0.1. The geometric network in Fig. \ref{large_networks} (a) has a diameter of $\Delta_g = 5$ where the random network in Fig. \ref{large_networks} (b) has a diameter of $\Delta_r = 4$. 

The direction of movement of each agent over time is depicted in Figs. \ref{large_network_results}(a)--(d). Figs. \ref{large_network_results}(a) and \ref{large_network_results}(b) show the heading angle $\phi_i$ of agents in geometric and random networks, respectively. Figs. \ref{large_network_results}(c) and \ref{large_network_results}(d) show the take-off angle $\psi_i$ of agents in geometric and random networks, respectively. Fig. \ref{large_network_results} illustrates that agents' movement directions converge to the best estimates in heading and take-off angles in a finite number of steps.  As a result, at the end of the convergence time $T$, we have $\bbE_{i,t} [\phi \given h_{i,T}] = \bbE[\phi \given \bbx[1]]$ and $\bbE_{i,t} [\psi \given h_{i,T}] = \bbE[\psi \given \bbx[2]]$ for all $i \in V$. Further, convergence time is in the order of the diameter for both of the networks. This means that agents learn the sufficient statistic to calculate best estimates in the amount of time it takes for information to propagate through the network.


\begin{figure}[!t]
\centering
\begin{tabular}{cc} \hspace{-4mm}
\includegraphics[width=0.42\linewidth, height=0.3\linewidth]{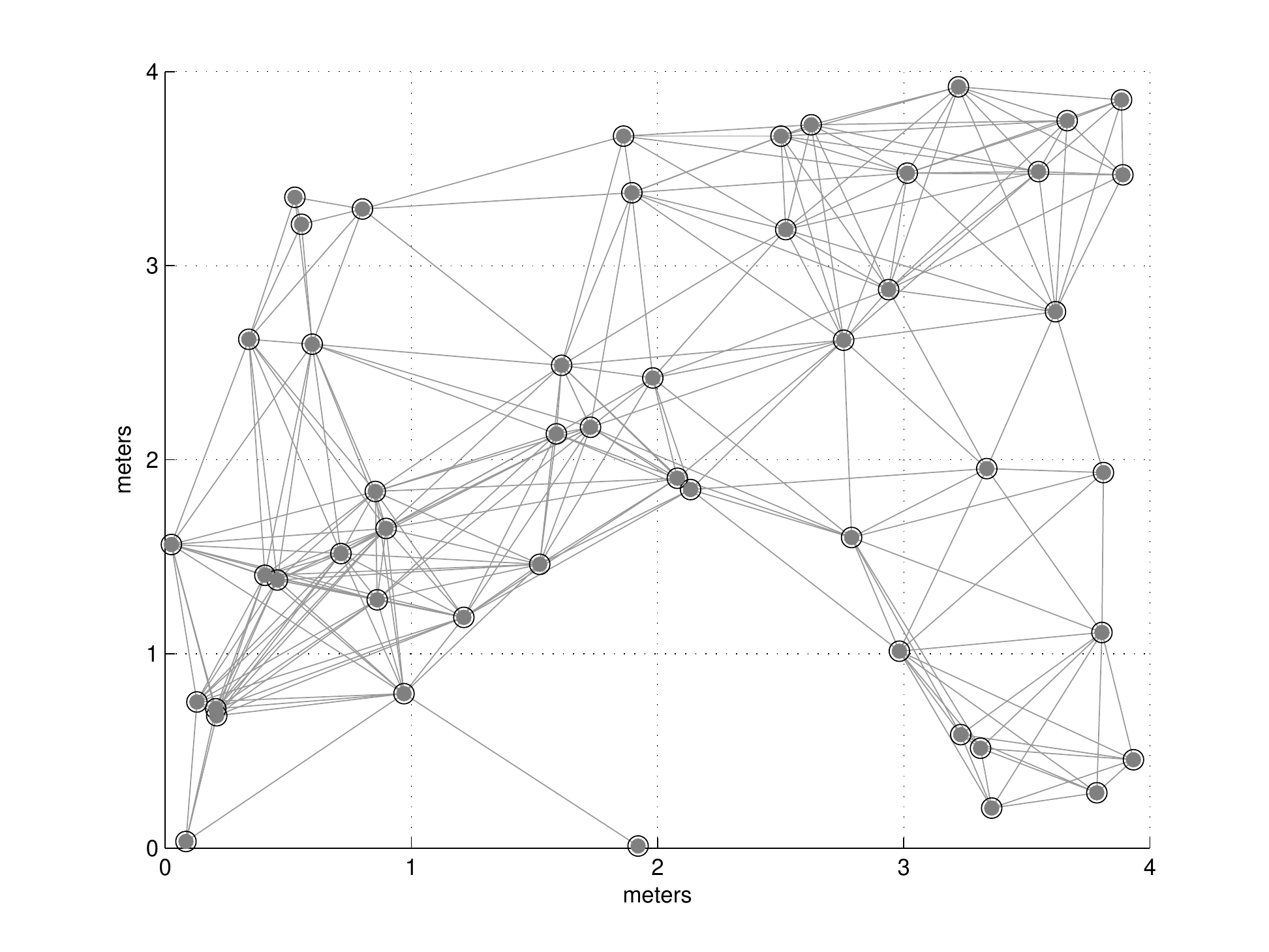}
&\includegraphics[width=0.42\linewidth, height=0.3\linewidth ]{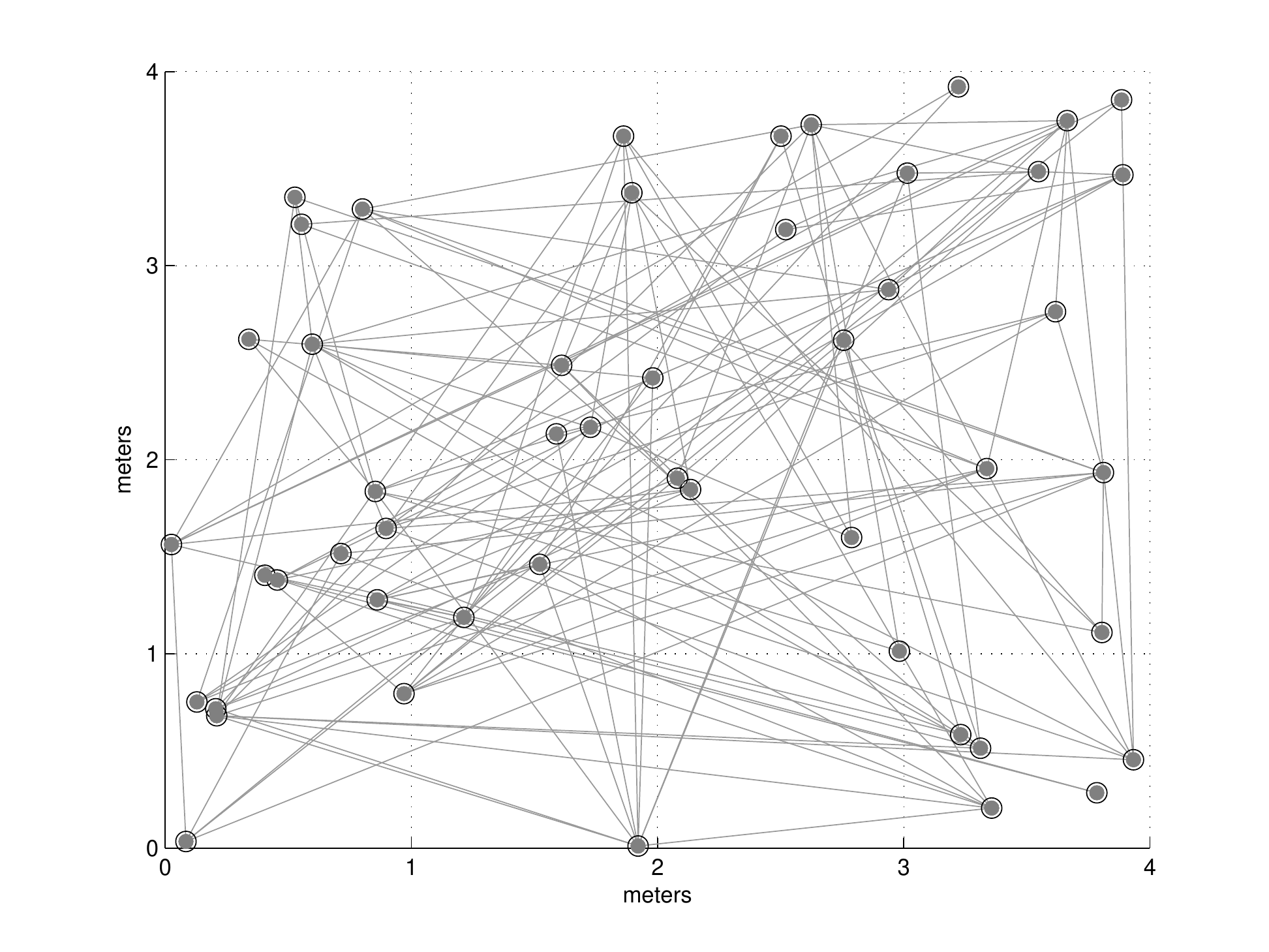} \vspace{-2mm}
 \\
    \fontsize{7}{12}\selectfont (a)
      	       & \fontsize{7}{12}\selectfont (b)
 \end{tabular}\vspace{-1mm}
\caption{Geometric (a) and random (b) networks with $N = 50$ agents. Agents are randomly placed on a $4$ meter $\times$ $4$ meter square. There exists an edge between any pair of agents with distance less than $1$ meter apart in the geometric network. In the random network, the connection probability between any pair of agents is independent and equal to $0.1$. }
\label{large_networks}\vspace{-3mm}
\end{figure}

\begin{figure}[!t]
\centering
\begin{tabular}{cc}
\includegraphics[width=0.42\linewidth, height=0.3\linewidth ]{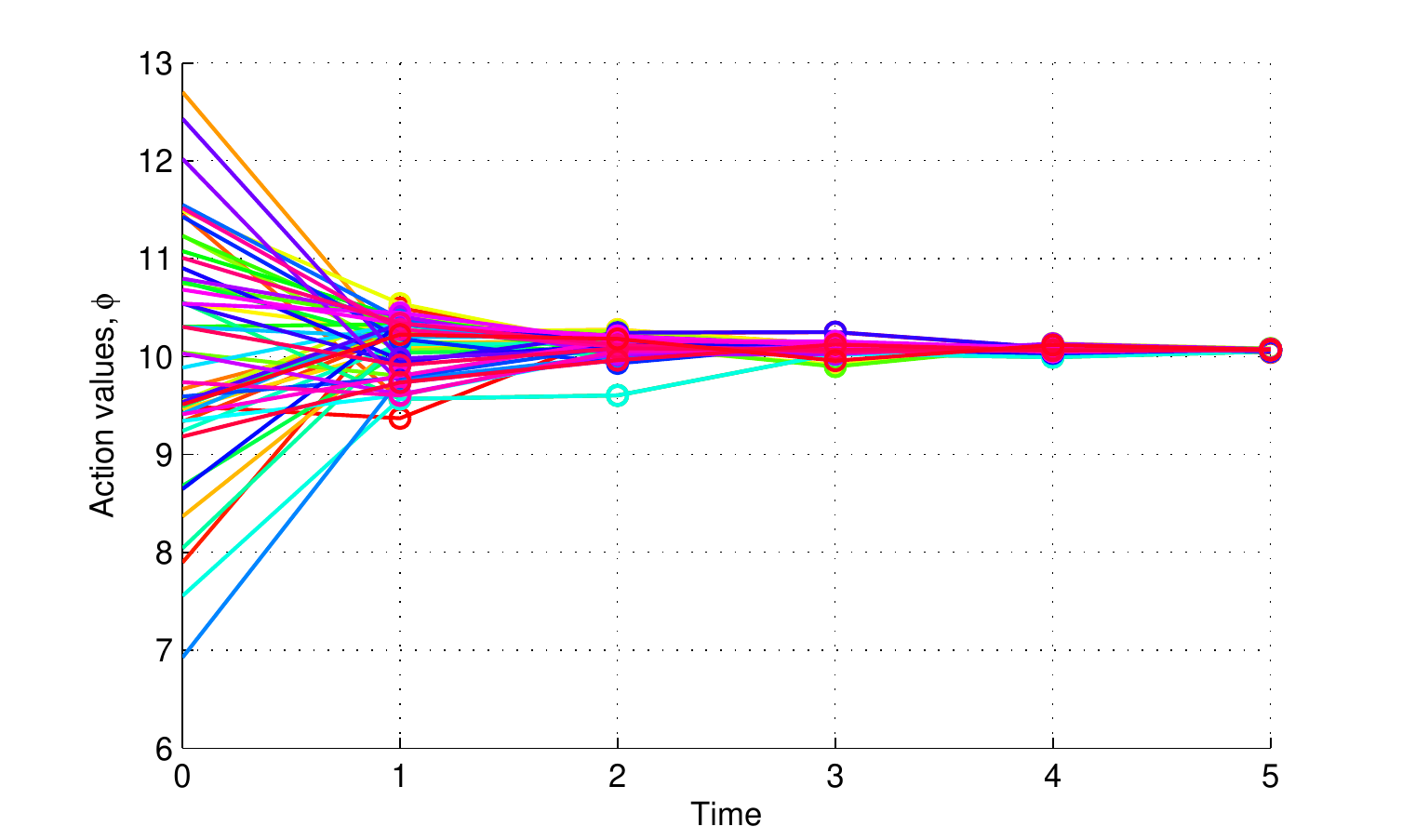}
&\includegraphics[width=0.42\linewidth, height=0.3\linewidth ]{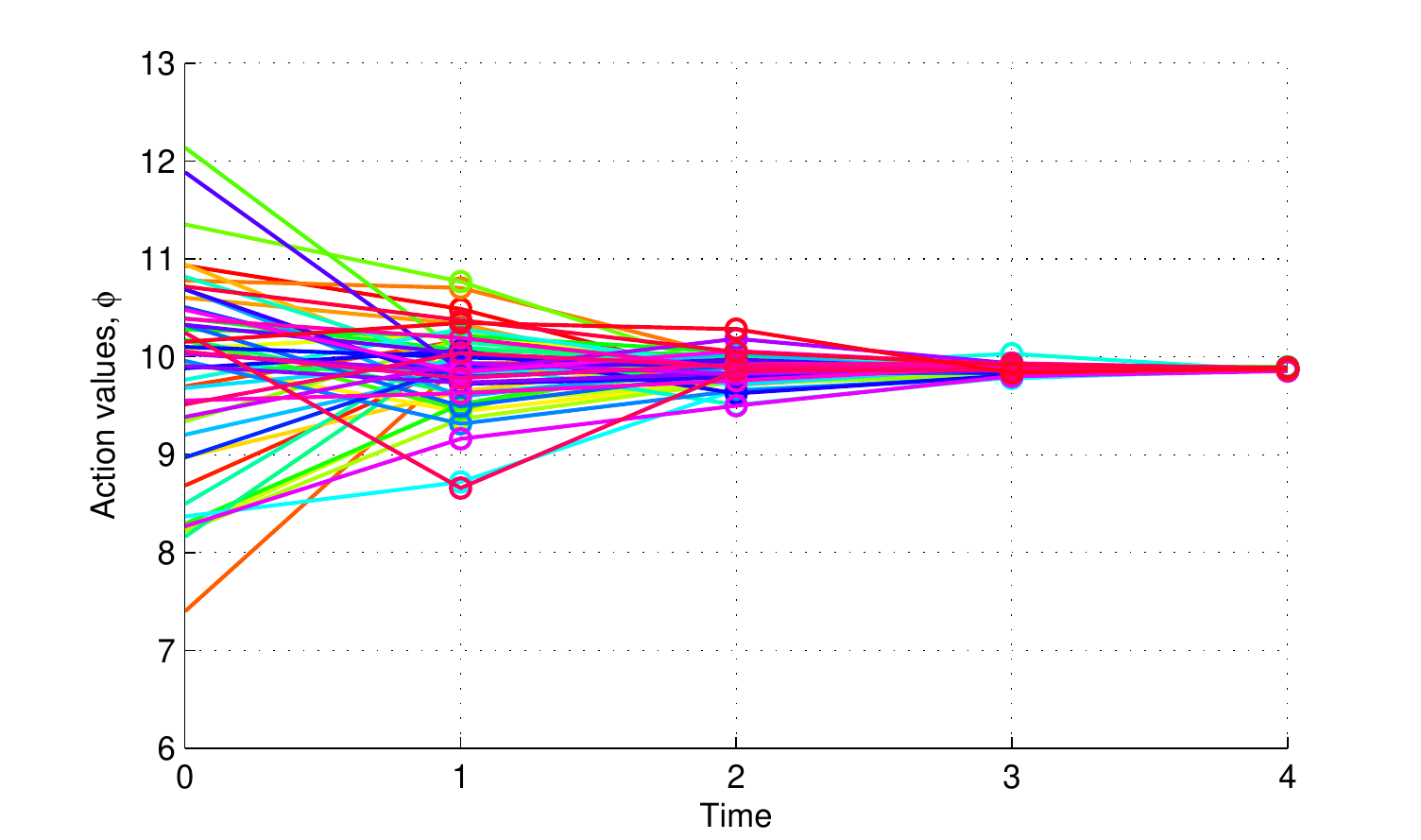} \vspace{-2mm}
 \\
    \fontsize{7}{12}\selectfont (a)
      	       & \fontsize{7}{12}\selectfont (b) \\
\includegraphics[width=0.42\linewidth, height=0.3\linewidth ]{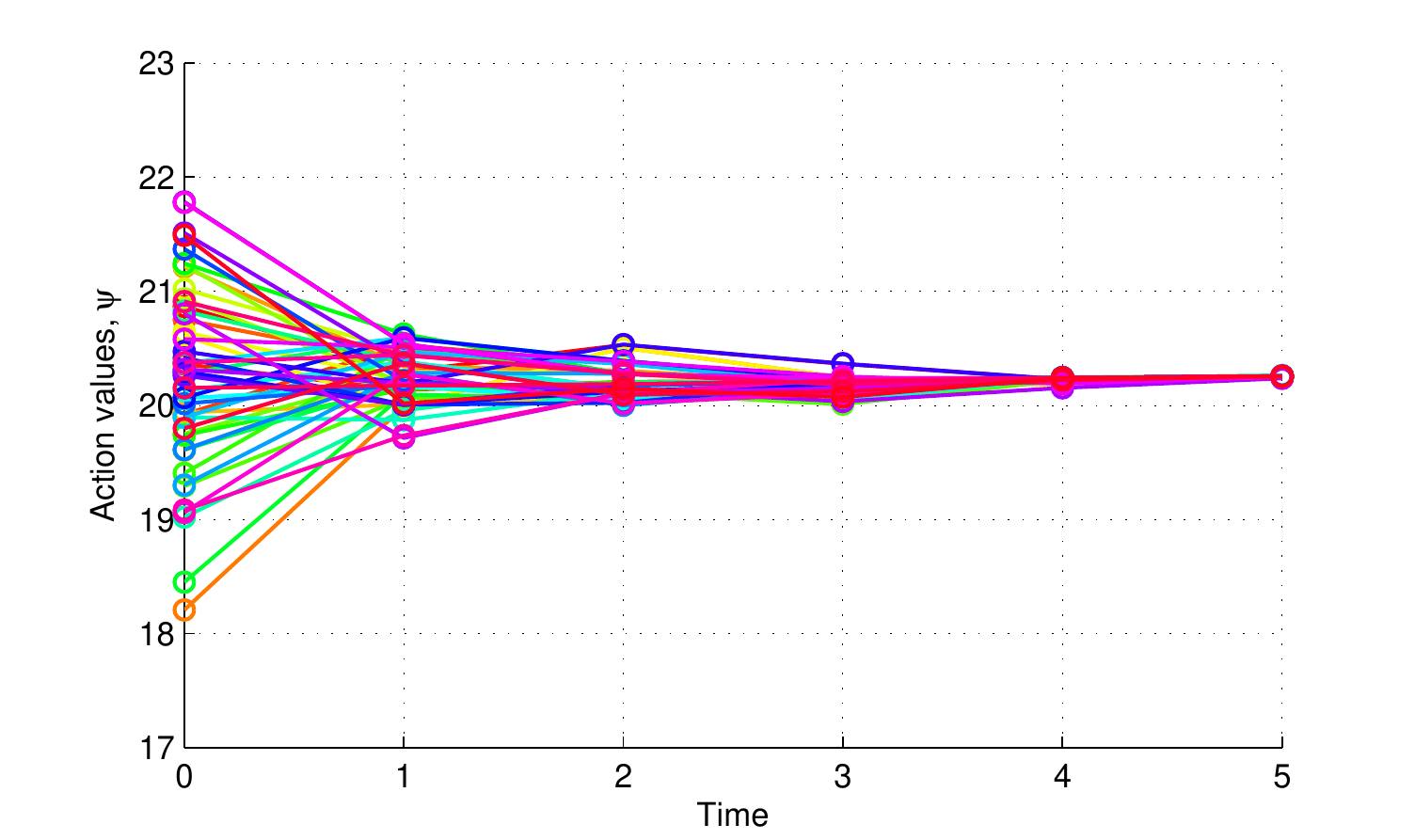}
&\includegraphics[width=0.42\linewidth, height=0.3\linewidth ]{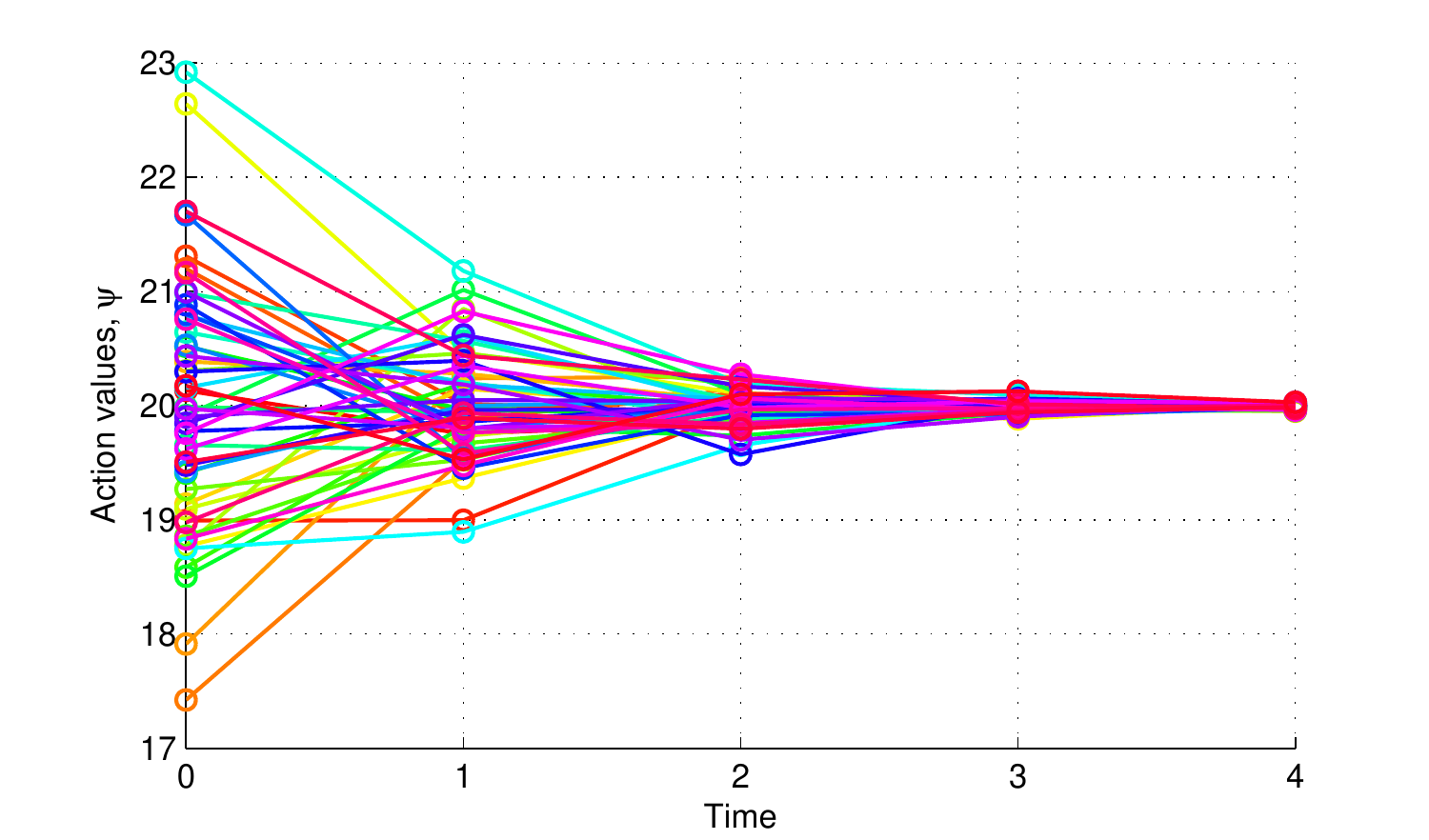}  \vspace{-2mm} \\
    \fontsize{7}{12}\selectfont (c)
      	       & \fontsize{7}{12}\selectfont (d)
 \end{tabular}\vspace{-1mm}
\caption{
Agents' actions over time for the coordination game and networks shown in Fig. \ref{large_networks}. Values of agents' actions over time for heading angle $\phi_i$ (top) and take-off angle $\psi_i$ in geometric (left) and random (right) networks respectively. Action consensus happens in the order of the diameter of the corresponding networks.} \vspace{-3mm}
\label{large_network_results}
\end{figure}

\section{Conclusion} \label{conclusion}

In this paper we introduced the QNG filter that agents can run locally to update their beliefs and select equilibrium actions actions in repeated quadratic games with both information and payoff externalities. The QNG filter provides a mechanism to update beliefs in a Bayes' way when agents' initial prior over the state of the world is Gaussian. We began by showing that when the prior estimates of private signals are Gaussian with means equal to a linear combination of private signals, and the equilibrium strategies of agents are linear combination of mean estimates of private signals, Bayesian updates of estimates of private signals and the underlying state follow a sequential LMMSE estimator. This meant that the estimates remain linear combinations of private signals, and hence, Gaussian. By induction, estimates remain Gaussian for all times if equilibrium actions that are linear in mean of the estimates exist at all the stages. Further, we derived an explicit recursion for tracking of estimates of private signals and calculating equilibrium actions which we leverage to develop the QNG filter. We then extended the QNG filter to the case when the state of the world is a vector. We exemplified the QNG filter in Cournot competition game and coordination of mobile agents on 3-dimensional space. In the former the state of the world, effective profit, was a scalar, whereas in the latter the state of the world was a vector including heading and take-off angles. In both examples, the QNG filter converged to the BNE of the game in number of steps that is equal to the order of the diameter of the network. This meant that agents learnt the sufficient statistic of the state while not necessarily learning all the individual private signals. 

\begin{appendices}
\section{Proof of Theorem \ref{rational_updates_theorem}:} \label{Theorem_1_proof}
At time $t=0$ beliefs are normal and have the form in \eqref{linear_estimates_first_one}. Indeed, since the only information available to agent $i$ at time $t=0$ is the private signal $x_i$ it follows from the linear observation model in \eqref{private_signal_model} that this is the value assigned to the estimate of all private signals as well as to the estimate of the state $\theta$,
\begin{align} \label{initial_mean_estimates}
   \bbE_{i,0} \left[x_j\right]   = x_{i} \text{\ for all\ } j,\quad
   \bbE_{i,0} \left[\theta\right] = x_{i}. 
\end{align}
The elements of the matrix $L_{i,0} = \bbone\bbe_i^T$ are $1$ in the $i$th column and $0$ otherwise. Therefore, the first expression in \eqref{initial_mean_estimates} is equivalent to the first expression in \eqref{initial_weights}. Likewise, since the $i$th element of $\bbe_i$ is one with remaining elements zero, the second expression in \eqref{initial_mean_estimates} is equivalent to the second expression in \eqref{initial_weights}. As for the variances in \eqref{initial_variances}, note that the initial estimate of $\bbx$ has error covariance matrix defined as in \eqref{error_covariance_xx} for $t=0$. By substituting initial mean estimates inside \eqref{error_covariance_xx} and then using the fact that $\bbe_i^T \bbx = x_i$, the error covariance matrix can be rewritten as
\begin{align}
     M^i_{\bbx \bbx}(0) 
            =& \bbE_{i,0}\Big[\big(\bbx - \bbone x_i\big)
                            \big(\bbx - \bbone x_i\big)^T\Big] \label{initial_covariance_estimates_2}
\end{align}
%
 From \eqref{initial_covariance_estimates_2}, we get the following by using the fact that $x_j - x_i = \epsilon_j -\epsilon_i$ by \eqref{private_signal_model},
\begin{align}
     M^i_{\bbx \bbx}(0) =& \bbE_{i,0}\Big[\big(\bbepsilon - \bbone \eps_i\big)
                            \big(\bbepsilon - \bbone \eps_i\big)^T\Big]. \label{initial_covariance_estimates_3}
\end{align}
  When we expand the terms in \eqref{initial_covariance_estimates_3}, we obtain the following           
\begin{align}
M^i_{\bbx \bbx}(0) 
           =& \bbE_{i,0}\left[\bbepsilon \bbepsilon^T\right] - \bbE_{i,0}\left[\bbepsilon \bbone^T \epsilon_i\right] 
          -\bbE_{i,0}\left[\bbone \epsilon_i \bbepsilon^T \right]  + \bbone \bbone^T \bbE_{i,0}\left[ \epsilon_i^2 \right] \label{initial_covariance_estimates_4} \\
            =& \text{diag}(\bbc) - \bbe_i \bbone^T c_i - \bbone \bbe_i^T c_i + \bbone \bbone^T c_i \label{initial_covariance_estimates_5}\\
            =& \text{diag}(\bbc) +  \bar\bbe_i\bar\bbe_i^T c_i - \bbe_i \bbe_i^T c_i  \label{initial_covariance_estimates_6}
\end{align}
Since private signals are independent among agents, that is $\bbE_{i,0}[\epsilon_k \epsilon_j]  = 0$ for all $j \in V \setminus k$ and $k \in V$, we have $\bbE_{i,0}[\bbepsilon \bbepsilon^T] = \text{diag}(\bbc)$, $\bbE_{i,0}[\bbepsilon \epsilon_i] = \bbe_i c_i$. Using these relations  and the definition of noise variance $c_i = \bbE[ \epsilon_i^2]$, \eqref{initial_covariance_estimates_5} follows from \eqref{initial_covariance_estimates_4}.
When second and third terms are subtracted from the fourth term in \eqref{initial_covariance_estimates_5}, we obtain the last two terms in \eqref{initial_covariance_estimates_6}. Now, observe that $\text{diag}(\bbc) - \bbe_i \bbe_i^T c_i = \text{diag}(\bar\bbe_i) \diag(\bbc)$, hence \eqref{initial_covariance_estimates_6} can be rewritten as in \eqref{initial_variances}. 

Consider the variance of $\theta$ defined in \eqref{error_covariance_state}  at time $t =0$. Substituting $\bbE_{i,0}[\theta] = x_i$ inside \eqref{error_covariance_state}, we have 
\begin{align}
M^i_{\theta \theta}(0) = \bbE_{i,0}\left[ (\theta - x_i)^2 \right]
\end{align}
By the signal structure \eqref{private_signal_model} with additive zero mean Gaussian term $\epsilon_i$, we have $\theta- x_i = -\epsilon_i$. As a result, $M^i_{\theta \theta}(0) = \bbE_{i,0}[\epsilon_i^2]$ which is in return equal to $c_i$. Next consider the cross-covariance between $\theta$ and $\bbx$ defined in \eqref{error_covariance_thetax} at time $t = 0$,
\begin{align}
M^i_{\theta \bbx}(0) 
	=& \bbE_{i,0}\Big[\big(\theta - \bbE_{i,0}  \left[\theta\right]\big)\big(\bbx - \bbE_{i,0} \left[\bbx\right]\big)^T \Big] \label{initial_cross_covariance_estimates}\\
	=& \bbE_{i,0}\Big[ (-\epsilon_i) (\bbepsilon - \bbone \epsilon_i)^T \Big]  \label{initial_cross_covariance_estimates_2}
\end{align}
The second equality follows by substitution of initial mean estimates and then using the definition of private signals \eqref{private_signal_model}. Next, we multiply out the terms in \eqref{initial_cross_covariance_estimates_2}, use independence of private signals between agents to get \eqref{initial_variances}.

The inductive hypotheses is then true at time $t=0$ with the explicit initializations in \eqref{initial_weights} and \eqref{initial_variances}. Lemma \ref{rational_belief_updates_theorem_L} has already shown that if the inductive hypothesis is true at time $t$, it is also true at time $t+1$. It also provided the explicit recursions in \eqref{eqn_lmmse_gain_x}-\eqref{eqn_lmmse_gain_theta} and \eqref{weights_recursion_x}-\eqref{theta_x_covariance_estimate}. Lemma \ref{linear_equilibrium_at_time_t} further shows that the action coefficients $\bbv_{i,t}$ can be computed by solving the system of linear equations in \eqref{BNE_matrix_form}. 

\section{Proof of Theorem \ref{rational_updates_theorem_vector}:} \label{Theorem_2_proof}
At time $t =0$, agents beliefs are normal and have the form in \eqref{linear_estimates_first_one_vector}. Since the only information available to agent $i$ at time $t=0$ is the private signal $\bbx_i$, it follows from the observation model in \eqref{private_signal_vector} that agent $i$ assigns $\bbx_i$ as his mean estimates of the underlying parameter vector and the private signals as in \eqref{state_initial_estimate}-\eqref{signal_initial_estimate}. Next, consider the initial error covariance matrix $M^i_{\bbx \bbx}(0)$,
\begin{align}
M_{\bbx \bbx}^i(0) &= \bbE_{i,0}\left[ \left(\bbx - \bbE_{i,0}[\bbx]\right)\left( \bbx - \bbE_{i,0}[\bbx]\right)^T\right] \label{state_variance_initial_definition_vector}\\
&\hspace{-10mm} = \bbE_{i,0}\left[
 \left(\begin{array}{c}
                       \bbx[1] - \bbone \bbx_i[1] \\
                       \vdots\\
                        \bbx[N] - \bbone \bbx_i[N]
                     \end{array}\right) 
                      \left(\begin{array}{c}
                       \bbx[1] - \bbone \bbx_i[1] \\
                       \vdots\\
                        \bbx[N] - \bbone \bbx_i[N]
                     \end{array}\right)^T 
                      \right] \label{state_variance_initial_definition_vector_2}
\end{align}
Substituting initial mean estimates \eqref{signal_initial_estimate} in \eqref{state_variance_initial_definition_vector} and using the fact that  $\bbone \bbe_i^T \bbx[n] = \bbone \bbx_i[n]$, we get \eqref{state_variance_initial_definition_vector_2}. 
 Let $\bbepsilon[n]:=[ \bbepsilon_1[n], \dots, \bbepsilon_N [n]]^T\in \reals^N$ denote the noise values of agents on the $n$th state of the world, then we can write each $N\times N$  block of the matrix obtained in \eqref{state_variance_initial_definition_vector_2} as follows
\begin{align}
\bbE_{i,0} & \left[(\bbx[k] - \bbone \bbx_i[k]) ( \bbx[l] - \bbone \bbx_i[l])^T \right] 
=  \bbE_{i,0}\left[  
\left(\bbepsilon[k] - \bbone \bbepsilon_i[k]\right) \left(\bbepsilon[l] - \bbone \bbepsilon_i[l] \right)^T
%
\right]. \label{state_variance_initial_definition_vector_3}
\end{align}
Since initial private signals of agent $i$ are assumed to be independent of each other,
that is, $\bbE_{i,0}[\bbepsilon_i[k] \bbepsilon_i[l]] = 0$ for all $k =1, \dots, m$ and $l \neq k$, 
\eqref{state_variance_initial_definition_vector_3} is zero when $k \neq l$. When $k = l$, \eqref{state_variance_initial_definition_vector_3} is equivalent to \eqref{initial_covariance_estimates_3}. As a result, for the $N \times N$ blocks at the diagonals of $M^i_{\bbx \bbx}(0)$, we obtain \eqref{variance_vector} which is similar to its scalar counterpart given in \eqref{initial_variances}. Consider the variance of $\bbtheta$ at time $t=0$. Using \eqref{state_initial_estimate}, we obtain that $M_{\bbtheta \bbtheta}^i(0)$ is as given in \eqref{state_variance}. The initial cross covariance can also be calculated using initial mean estimates in \eqref{state_initial_estimate} and \eqref{signal_initial_estimate} in a similar way. 

Given the normal prior $\bbP_{i,0}([\bbtheta^T, \bbx^T])$ with mean estimates given by \eqref{state_initial_estimate}-\eqref{signal_initial_estimate}, the inductive hypothesis in Lemma \ref{linear_equilibrium_at_time_t_vector} is satisfied at time $t=0$. Further, by our assumption there exists a linear equilibrium action with weights $U_{i,0}$ that can be calculated by solving the set of equations in \eqref{linear_equilibrium_vector_thm}. Lemma \ref{rational_belief_updates_theorem_L_vector} already provides a way to propagate beliefs when agents play according to linear equilibrium strategy.  Furthermore, by Lemma \ref{rational_belief_updates_theorem_L_vector}, if the inductive hypothesis is true at time $t$ then it is also true at time $t+1$.
\end{appendices}

\bibliographystyle{unsrt}
\bibliography{bibliography}

\end{document}